\newtheorem{theorem}{Theorem}
\newtheorem{lemma}{Lemma}
\newtheorem{definition}{Definition}
\newtheorem{remark}{Remark}
\newcommand{\bra}[1]{\mbox{$\left\langle #1 \right|$}}
\newcommand{\ket}[1]{\mbox{$\left| #1 \right\rangle$}}
\newcommand{\floor}[1]{\left\lfloor #1 \right\rfloor}
\newcommand{\ceil}[1]{\left\lceil #1 \right\rceil}
\newcommand{\tens}{\mathbin{\mathop{\otimes}}}
\newcommand{\Tr}{\operatorname{Tr}}
\begin{document}

\preprint{APS/123-QED}

\title{One-shot Coherence Distillation with Catalysts }

\author{Senrui Chen}
\affiliation{Department of Electronic Engineering, Tsinghua University, Beijing 100084, China}
\affiliation{Center for Quantum Information, Institute for Interdisciplinary Information Sciences, Tsinghua University, Beijing 100084, China}
%
%
%
\author{Xingjian Zhang}
\author{You Zhou}
\affiliation{Center for Quantum Information, Institute for Interdisciplinary Information Sciences, Tsinghua University, Beijing 100084, China}
\author{Qi Zhao}
\email{zhaoqithu10@gmail.com}
\affiliation{Center for Quantum Information, Institute for Interdisciplinary Information Sciences, Tsinghua University, Beijing 100084, China}
\affiliation{Shanghai Branch, National Laboratory for Physical Sciences at Microscale and Department of Modern Physics, University of Science and Technology of China, Shanghai 201315, China}
\affiliation{CAS Center for Excellence in Quantum Information and Quantum Physics, University of Science and Technology of China, Shanghai 201315, China}

%

\date{\today}

\begin{abstract}
The resource theory of quantum coherence is an important topic in quantum information science. 
Standard coherence distillation and dilution problems have been thoroughly studied. In this paper, we introduce and study the problem of one-shot coherence distillation with catalysts. In order to distill more coherence from a state of interest, a catalytic system can be involved and a jointly free operation is applied on both systems. The joint output state should be a maximally coherent state in tensor product with the unchanged catalysts, with some allowable fidelity error. We consider several different definitions of this problem. Firstly, with a small fidelity error in both systems, we show that, even via the smallest free operation class (PIO), the distillable coherence of any state with no restriction on the catalysts is infinite, which is a ``coherence embezzling phenomenon''.
We then define and calculate a lower bound for the distillable coherence when the dimension of catalysts is restricted.
Finally, in consideration of physical relevance, we define the ``perfect catalysts'' scenario where the catalysts are required to be pure and precisely unchanged. Interestingly, we show that in this setting catalysts basically provide 
no advantages in pure state distillation via IO and SIO under certain smoothing restriction. Our work enhances the understanding of catalytic effect in quantum resource theory.
\end{abstract}

\pacs{Valid PACS appear here}
\maketitle


\section{Introduction}\label{intro}
Quantum coherence, a notion describing the superposition phenomenon, distinguishes a quantum system from its classical counterpart. Quantum coherence plays an important role in thermodynamics \cite{Aberg14,EmbezThermo,Horodecki15,Lostaglio15,lostaglio2015description,narasimhachar2015low}, quantum randomness generation \cite{cohrand,PhysRevA.97.012302,Yuan17uncertainty,ma2017source,Zhao19one}, quantum cryptography \cite{ma2018operational} and other quantum information processing tasks \cite{cohQKD,cohchem}. After several pioneering works \cite{Aaberg2006, BraunGeorgeot}, a resource framework of quantum coherence is introduced by Baumgratz et al. \cite{baumgratz14} and has greatly enhanced our understanding of this phenomenon. A typical quantum resource theory shares three basic components: a set of free states that contain no resource, a set of free operations that map any free state to a free state (hence generating no resource), and a metric functional that characterizes the amount of resource in a quantum state. One may refer to \cite{Eric18review} for a recent review on quantum resource theory. As for the resource theory of quantum coherence, the set of free states is defined as all density operators which are diagonal in a predetermined basis (incoherent basis) $\{\ket{i}\}$. There are different definitions of free operations such as maximal incoherent operations~\cite{Aaberg2006} (MIO), incoherent operations~\cite{baumgratz14} (IO), dephasing-covariant incoherent operations~\cite{Chitambar16prl, Marvian16} (DIO), strictly incoherent operations~\cite{winter16,Yadin16} (SIO), and physically implementable incoherent operations~\cite{Chitambar16prl} (PIO), showing differences in their operational ability and physical relevance. Furthermore, different measures of coherence of a state are defined, such as the $l_1$-norm, the relative entropy of coherence \cite{baumgratz14}, the coherence of formation \cite{Yuan15intrinsic, Aaberg2006,liu2018superadditivity}, the robustness of coherence \cite{Napoli16}, polynomial measure of coherence \cite{zhou2017polynomial}, etc. The operational meanings of some measures are further uncovered \cite{Yuan15intrinsic, winter16, Rana17, Liu18}. One could refer to~\cite{Streltsov-review, Hu2018review} for reviews of recent developments on the resource theory of quantum coherence. 

In a resource theory, the most fundamental operational tasks are distillation and dilution of the resource via free operations \cite{rains2001semidefinite,hayden2001asymptotic,winter16}, which exhibit the ability to manipulate the underlying resource. 
In the resource theory of coherence, similar to the well-known resource theory of entanglement,
coherence distillation is the task that transforms a given quantum state to as many as possible maximally coherent qubits $\ket{\Psi_{\mathrm{2}}} = \frac{\ket{0}+\ket{1}}{\sqrt 2}$ (in the basis $\{\ket{0},\ket{1}\}$), the unit of coherence, or a ``{cosbit}".
Dilution is the reverse process of distillation, which aims to prepare a target state with as few as possible cobits. The conversion rates of both tasks have been solved in the asymptotic cases, under the i.i.d. assumption, where infinite and identically distributed copies of the same state are available \cite{OpCoh, Lami19}.
Recently, these tasks are generalized to the one-shot scenario where only one-copy of the state is supplied \cite{zhao2018one,Regula-one-shot,Zhao19one}. {In the one-shot coherence distillation, a different target can be taken, that is, to maximize the dimension $d$ of the maximally coherent state that can be transformed from just one copy of a given state, allowing for a predetermined error $\varepsilon$. }The one-shot coherence manipulation reflects realistic experimental requests where the number of available states are finite,
and the results could be applied to quantum information processing, such as randomness generation, cryptography, and thermodynamics. 

Compared with the dilution problem, coherence distillation has more significance in  practical applications since in most of cases, we only have the access to the ``not so good'' initial states. Therefore it is quite important to investigate the limit of coherence distillations in order to boost the distillation rate. Apart from the standard distillation setting, other variant tasks have been proposed and investigated, in order to enhance the distillation process. For example, assisted coherence distillation protocols \cite{vijayan2018one,regula2018non-asymptotic,PhysRevLett.116.070402,PhysRevX.7.011024} consider distilling coherence from a system of interest with the help of another system. Any local operation on the assistant system and classical communication are allowed, while the system of interest is restricted to free operations. In this scenario, the assistant system is not required to be invariant after the procedure.

Different from the assisted distillation, an alternative way to help distill coherence is to use a catalytic system, which should remain invariant (precisely or approximately) after the distillation process. Catalysis is a concept that emerges from chemistry. A catalyst initiates a certain reaction that cannot happen without it, while itself does not change or being consumed after the process. Similar effect also exists in quantum resource theory. For some states $\rho$ and $\sigma$, there may not be free operations that transform $\rho$ to $\sigma$, but with the catalyst $\gamma$, a free operation that transforms $\rho \tens \gamma$ to $\sigma \tens \gamma$ may exist. Resource catalyst is first discovered in entanglement theory \cite{Jonathan99}, and important results of precise catalytic transformation of pure entanglement have been reached in \cite{Turgut07, Klimesh07}. The idea of catalysts is also generalized to thermodynamics \cite{EmbezThermo,catacohpossible} and coherence \cite{Bu16}. 
Since resource catalysts are shown to be useful in various tasks, it is reasonable to expect that they can enhance the distillation process.

In this paper, we introduce and systematically study the problem of catalytic distillation of coherence.
We first give the definition of catalytic coherence distillation with unrestricted dimension of the catalysts and allowing global smoothing parameter. Smoothing is a well-used technique in the one-shot convertibility between quantum states, which allows a small error in fidelity between the target state and the realistic output state, due to both mathematical and physical considerations. A global smoothing means a fidelity error involving both the original system and the catalytic system, see for instance \cite{CataDec, CataRes}. 
Surprisingly, we find that under this definition one can distill an infinite amount of coherence from any initial state with arbitrarily small error, which we call a ``coherence embezzling phenomenon''. Two different embezzling protocols are proposed to illustrate this phenomenon. We further explore the distillable coherence with catalysts of a restricted dimension, and calculate a lower bound of this quantity with the protocols discussed before. Next, in consideration of physical relevance, we propose another definition, the distillable coherence with perfect catalysts, where the catalysts are required to be pure and precisely unchanged, and smoothing is only allowed in the original distilled system. We show that, in this scenario catalysts basically provide no advantages in pure state distillation via IO and SIO, under a ``pure state smoothing'' restriction which will be explained later in this paper, hence partly solve this problem. 
\begin{figure}
    \centering
    \includegraphics[width=0.9\columnwidth]{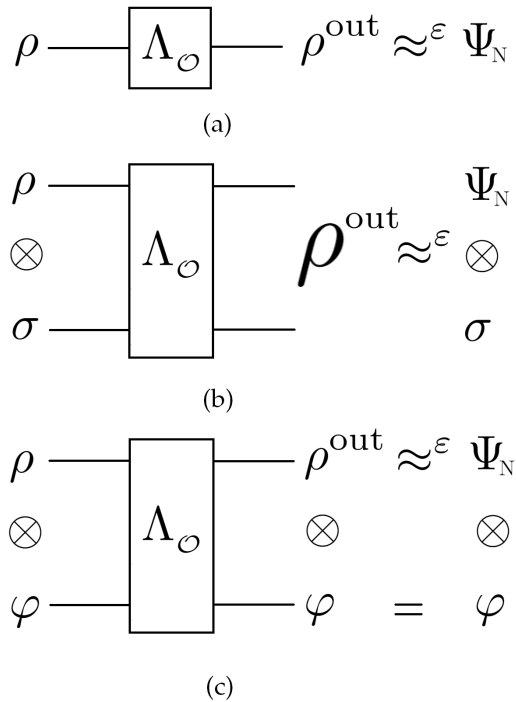}
     \caption{Different schemes for one-shot coherence distillation via incoherent operation $\Lambda_{\mathcal{O}}$. (a) Standard coherence distillation. 
The output state $\rho^\text{out}=\Lambda_{\mathcal{O}}(\rho)$ is close to the $N$-dimensional maximally coherent state $\Psi_{\mathrm{N}}$ with purified distance $\mathrm{P}(\rho^\text{out} , \Psi_{\mathrm{N}}) \le \varepsilon$.
     (b) Catalytic coherence distillation with global smoothing. The joint output state $\rho^{out}=\Lambda_{\mathcal{O}}(\rho\otimes \sigma)$ is close to $\Psi_{\mathrm{N}}\otimes \sigma $ with purified distance
  $\mathrm{P}(\rho^\text{out},\Psi_{\mathrm{N}}\otimes \sigma )  \le \varepsilon$. (c) Catalytic coherence distillation with perfect catalysts. The catalyst $\varphi$ is required to be pure and precisely unchanged, and the output state $\rho^\text{out}$ in the distilled system satisfies
 $\mathrm{P}\left(\rho^\text{out},\Psi_N\right)\le\varepsilon$.} 
\end{figure}
To make the picture clear, we show different schemes of one-shot coherence distillation in FIG.~1.(a)-(c).

The paper is organized as follows: Sec.~\ref{pre} gives some preliminaries and notations, where we review the basics of coherence resource theory and introduce the standard distillation problem. Sec.~\ref{sec:level2} defines the distillable coherence with unrestricted catalysts and global smoothing. We show the existence of ``coherence embezzling phenomenon'' with two different embezzling protocols. In Sec.~\ref{sec:level3}, we give an amended definition, the distillable coherence with catalysts of a restricted dimension, and calculate a lower bound of this quantity. In Sec.~\ref{perfect}, we propose another amended definition, the distillable coherence with perfect catalysts, and find that catalysts become basically helpless in pure state distillation via IO and SIO under certain smoothing restriction. In Sec.~\ref{sec:level4}, we summarize our results and discuss their connection with some recent work, as well as some open questions.


\section{Preliminary}\label{pre}
In this section, we review the resource theory of quantum coherence and standard one-shot coherence distillation problems. 
Throughout this paper we denote the predetermined computational basis as $I = \{\ket{i}\}_{i=1}^{d}$ in a $d$-dimensional Hilbert space $\mathcal{H}_d$, and all concepts related with coherence are defined with respect to this basis. Incoherent states are defined as $\delta = \sum_{i=1}^{d}p_i\ket{i}\bra{i}$,
where $\{p_i\}$ is a probability distribution satisfying $\sum_{i=1}^{d}p_i=1 $. The set of incoherent states is denoted as $\mathcal{I}$. In the resource theory of quantum coherence, free operations are defined with respect to different operational meanings~\cite{Aaberg2006,baumgratz14,Chitambar16prl, Marvian16,winter16,Yadin16,Chitambar16prl}. Here we only introduce IOs~\cite{Aaberg2006,baumgratz14}, SIOs~\cite{winter16,Yadin16} and PIOs~\cite{Chitambar16prl}, which are used in our distillation process. IOs~\cite{baumgratz14} are completely positive trace preserving (CPTP) maps $\Lambda$ admitting a Kraus operator representation, $\Lambda(\rho) = \sum_n K_n\rho K_n^\dag$, such that the $\{K_n\}$s  are ``incoherent-preserving'' operators satisfying ${K_n\delta K_n^\dag}/{\mathrm{Tr}\left[ K_n\rho K_n^\dag\right]} \in \mathcal{I}$ for all $n$ and all $\delta\in\mathcal{I}$.  
SIOs are CPTP maps $\Lambda(\rho) = \sum_n K_n\rho K_n^\dag$ whose Kraus operator representations satisfy that both $\{K_n\}$ and $\{K_n^\dag\}$ are incoherent-preserving operators \cite{winter16}. PIOs are CPTP maps that allow a ``free dilation'', that is, they can be implemented by adding incoherent ancillary states, applying a jointly incoherent unitary, performing incoherent measurements, and doing classical post-processing~\cite{Chitambar16prl}. The relations of these incoherent operations classes are
\begin{equation}\label{relations}
\text{PIO} \varsubsetneq \text{SIO} \varsubsetneq \text{IO}.
\end{equation}
As a result to be used later, the mixture of incoherent unitary permutations,
\begin{equation}
\Lambda(\rho)=\sum_i P_{\pi_i}\rho P_{\pi_i}^\dagger,
\end{equation}
where $P_{\pi_i}=\sum_{j=1}^n \ket{\pi_i(j)}\bra{j}$ represents a permutation $\pi_i$, is a PIO, which can be directly verified by the definition. Therefore by Eq.~\eqref{relations} it is also an SIO and IO channel. 

The one-shot coherence distillation problem characterizes the maximal resource that can be distilled from an initial state with an allowed error $\varepsilon$. The definition of one-shot coherence distillation is as follows.

\begin{definition}[Regula \emph{et al.}~\cite{Regula-one-shot}, Zhao \emph{et al.}~\cite{Zhao19one}]\label{def:1}
    The one-shot distillable coherence of a state $\rho$ via incoherent operations from the class $\mathcal{O}$ is defined as
    \begin{equation}
      C^{\mathrm{\varepsilon}}_{\mathcal{O},d}(\rho) = \max_{\Lambda \in \mathcal{O}} \{{\log \mathrm{N}} :  {\mathrm{P}(\Lambda(\rho)} , {\Psi_{\mathrm{N}}}) \le \varepsilon\},
    \end{equation}
    where $\ket{\Psi_{\mathrm{N}}} = \frac{1}{\sqrt N}\sum_{i=1}^{N} \ket{i}  $ is the maximally coherent state of dimension $\mathrm{N}$. $\mathrm{P}(\rho,\sigma)$ is the purified distance $\mathrm{P}(\rho,\sigma)=\sqrt{1-\mathrm{F}(\rho,\sigma)^2}$ with fidelity $\mathrm{F}(\rho,\sigma) = \Tr\sqrt{\sqrt{\rho}\sigma\sqrt{\rho}}$.
\end{definition}
Throughout this work, logarithms are base 2. When expressing the purified distance and fidelity, if the two states being measured are pure states, say $\ket{\psi},\ket{\phi}$, we use the notation $\mathrm{P}({\psi},{\phi})$ and the alike, which can be understood as $\psi=\ket{\psi}\bra{\psi}$.

As for IO, in order to quantify $C^{\mathrm{\varepsilon}}_{\mathrm{IO},d}$, we introduce the \textit{smooth min-entropy of coherence} $C_{\min}^\varepsilon(\rho)$
\begin{equation}
  \label{Cmin}
  C_{\min}^\varepsilon(\rho)
    = \max_{\rho'\in B^\varepsilon(\rho)} \min_{\delta\in\mathcal{I}} D_{\min}(\rho\|\delta),
\end{equation}
where $B_\varepsilon(\rho)=\{\rho':\mathrm{P}(\rho',\rho)\le\varepsilon\}$ and $D_{\min}(\rho\|\sigma)=-\log \mathrm{F}(\rho,\sigma)^2=\tilde{D}_{1/2}(\rho\|\sigma)$~\cite{2013MartinRenyi}. Here $\tilde{D}_{1/2}$ is the special case of the sandwiched quantum $\alpha$-R\'enyi divergence $\tilde{D}_{\alpha}$ with $\alpha=1/2$ \cite{2013MartinRenyi,wilde2014strong}
\begin{equation}\label{alphadivergence}
    \tilde{D}_{\alpha}(\rho\|\sigma)=\frac{1}{\alpha-1}\log\left(\Tr\left[\left(\sigma^{\frac{1-\alpha}{2\alpha}}\rho\sigma^{\frac{1-\alpha}{2\alpha}}\right)^{\alpha}\right]\right).
  \end{equation}
The properties of the smooth entropy of coherence and its relationship with other coherence monotones are also explored in \cite{Liu18}. 
In Zhao \emph{et al.}~\cite{Zhao19one}, it is shown that 
\begin{equation}
 C^{\mathrm{\varepsilon}}_{\mathrm{IO,d}}(\rho)\approx C_{\min}^{\varepsilon'}(\rho),
\end{equation}
with $\varepsilon'\in[\frac{\varepsilon}{2},\sqrt{\varepsilon(2-\varepsilon)}]$.

As for SIO, Zhao \emph{et al.}~\cite{Zhao19one} show that there exists SIO bound coherence that cannot be distilled in both asymptotic and non-asymptotic cases. Afterward, Lami \emph{et al.}~\cite{PhysRevLett.122.150402} show that SIO bound coherence is actually generic, and also give a explicit formula to completely characterize the asymptotic distillable coherence under SIO as well as PIO. The problem of one-shot distillable coherence under SIO and PIO is still left open.

When the state to be distilled is restricted as a pure state, a complete characterization of the one-shot distillable coherence is given by Regula \emph{et al.}~\cite{Regula-one-shot}. They show that the operation classes  $\text{MIO, DIO, IO, SIO}$ all have the same power in the task of pure state coherence distillation and can be characterized by a quantum hypothesis testing problem. In the case of zero-error distillation, the result is
\begin{equation}
    C^0_{\mathcal{O},d}(\phi)=\log\floor{1 / \phi_{\max}},
\end{equation}
where $\ket\phi=\sum_{i=1}^n\sqrt{\phi_i}\ket i$ and $\phi_{\max} =\max\{\phi_i\}$ .


\section{Distillable Coherence with Unrestricted Catalysts}\label{sec:level2}



Different from the standard coherence distillation or assisted coherence distillation, we introduce and study catalytic coherence distillation. The procedure of catalytic coherence distillation is that, given a state $\rho$, we can choose another state $\sigma_{\mathrm{M}}$ as catalysts, and apply a jointly free operation on the total system. We require that the catalysts remain roughly unchanged in this process. Thus the final state should be close to a product state of a maximally coherent state and the unchanged catalysts $\Psi_{\mathrm{N}}\tens \sigma_\mathrm{M}$, within a purified distance $\varepsilon$. Similar to Def.~\ref{def:1}, it is straightforward to give a definition of catalytic distillable coherence as below.
\begin{definition}[Distillable Coherence with Unrestricted Catalysts]\label{def:unrestrict}
    The catalytic distillable coherence of a state $\rho$ via the free operation class $\mathcal{O}$ with unrestricted catalysts is defined as
    \begin{equation}
    \begin{split}
      &C^{\mathrm{\varepsilon}}_{\mathrm{\mathcal{O},c}}(\rho) =\\&  \max_{\Lambda \in \mathcal{O}}\max_{dim(\sigma_\mathrm{M}) < \infty}\{{\log \mathrm{N}} : {\mathrm{P}(\Lambda(\rho \tens \sigma_\mathrm{M})} , {\Psi_{\mathrm{N}}}\tens \sigma_\mathrm{M})\leq \varepsilon\}.
    \end{split}
    \end{equation}
\end{definition}
In this definition, we only require that the dimension of the catalytic state is finite. In the following, we show that this definition leads to a strange phenomenon that one can catalytically distill as much coherence as he wants, from any initial state and with respect to an arbitrarily small $\varepsilon$. This kind of phenomenon is first found in the resource theory of entanglement~\cite{Embez} and denoted as the embezzling phenomenon, and later generalized to other resource theories such as thermodynamics~\cite{EmbezThermo}. In our scenario, we formulate this phenomenon as the following theorem.

\begin{theorem}\label{th:unrestrict}
	For any quantum state $\rho$, an arbitrarily large integer $N$ and any $\varepsilon> 0$, there exists a PIO (therefore also SIO, IO) channel $\Lambda$ and a catalytic state $\sigma_M$ of dimension $M<\infty$ such that
	\begin{equation}
		\mathrm{P}\left(\Lambda(\rho \tens \sigma_M),\Psi_{\mathrm{N}}\tens \sigma_M\right)\leq\varepsilon.
	\end{equation}
	\end{theorem}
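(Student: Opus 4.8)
\emph{Approach.} The plan is to adapt the van Dam--Hayden embezzling construction to coherence theory, exploiting that PIO is already expressive enough to make the input $\rho$ irrelevant. First note that the partial trace over the input system is a PIO channel: its Kraus operators $\langle i|\tens I$ (with $\langle i|$ acting on the input register) each send one block of the incoherent product basis bijectively onto the output basis and kill the rest, which is exactly the PIO Kraus form. Since PIO is also closed under composition, it suffices to find a finite-dimensional catalyst $\sigma_M$ and a PIO channel that, acting on $\sigma_M$ alone, outputs a state within purified distance $\varepsilon$ of $\Psi_{\mathrm N}\tens\sigma_M$. I take the pure van Dam--Hayden state
\begin{equation}
  \ket{\mu_M}=\frac{1}{\sqrt{H_M}}\sum_{j=1}^{M}\frac{1}{\sqrt j}\,\ket{j},\qquad H_M:=\sum_{j=1}^{M}\frac1j,
\end{equation}
as catalyst on an $M$-dimensional register, with $M$ a large multiple of $N$ to be fixed at the end.

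\emph{Protocol.} After tracing out $\rho$, the PIO channel performs two steps. (i) It regroups the $M$ incoherent basis states of the catalyst into $M/N$ blocks of size $N$, i.e.\ relabels $\ket{(r-1)N+q}\equiv\ket q\tens\ket r$ and views the register as $\mathcal H_N\tens\mathcal H_{M/N}$; any relabeling of incoherent bases is an incoherent permutation, hence PIO by the remark below Eq.~\eqref{relations}. (ii) It appends an $N$-dimensional incoherent ancilla $\ket1$ to the $\mathcal H_{M/N}$ factor and relabels $\mathcal H_{M/N}\tens\mathcal H_N\cong\mathcal H_M$ via $\ket j\tens\ket1\mapsto\ket j$, so the catalyst register is again $M$-dimensional; appending an incoherent ancilla and relabeling are PIO. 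The output register structure is $\mathcal H_N\tens\mathcal H_M$, matching Definition~\ref{def:unrestrict} with $\sigma_M=\mu_M$.

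\emph{Why it works, and the error bound.} Under the regrouping, $\ket{\mu_M}=\frac1{\sqrt{H_M}}\sum_{r,q}\frac{1}{\sqrt{(r-1)N+q}}\,\ket q\tens\ket r$; bounding each amplitude below by $\tfrac1{\sqrt{rN}}$ and summing gives $\mathrm F(\mu_M,\Psi_{\mathrm N}\tens\mu_{M/N})\ge\sqrt{H_{M/N}/H_M}$. In step (ii) the state $\ket{\mu_{M/N}}$ is carried to its embedding inside $\mathcal H_M$ (supported on the first $M/N$ basis states), whose fidelity with $\ket{\mu_M}$ equals $\sqrt{H_{M/N}/H_M}$. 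Using that the purified distance satisfies the triangle inequality and is unchanged by appending a fixed ancilla, and that $1-H_{M/N}/H_M=(H_M-H_{M/N})/H_M<\ln N/\ln M$ (since $\sum_{j=M/N+1}^{M}1/j<\ln N$ and $H_M>\ln M$), the two estimates combine to
\begin{equation}
  \mathrm P\!\left(\Lambda(\rho\tens\mu_M),\,\Psi_{\mathrm N}\tens\mu_M\right)\le 2\sqrt{1-\tfrac{H_{M/N}}{H_M}}<2\sqrt{\tfrac{\ln N}{\ln M}},
\end{equation}
which is $\le\varepsilon$ once $M$ is chosen a sufficiently large multiple of $N$, e.g.\ of order $N^{4/\varepsilon^2}$. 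Since $\mathrm{PIO}\subsetneq\mathrm{SIO}\subsetneq\mathrm{IO}$, the same $\Lambda$ works for SIO and IO.

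\emph{Main obstacle.} There is no deep step; the two things to get right are (a) recognizing that the effect is really ``embezzling from nothing,'' so $\rho$ may be traced out by a PIO map at the outset and plays no role, and (b) using the embezzling state \emph{twice}---first to expose a fresh $\Psi_{\mathrm N}$ by regrouping its incoherent basis, then, via the near-identity $\mathrm F(\mu_M,\mu_{M/N})=\sqrt{H_{M/N}/H_M}\to1$, to refill the catalyst to its original dimension by appending an incoherent ancilla---all while remaining inside the meagre PIO toolkit (incoherent permutations, appending incoherent ancillas, incoherent measurements). Everything else is the elementary harmonic estimate $1-H_{M/N}/H_M=\Theta(\ln N/\ln M)$. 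A second, essentially equivalent protocol can instead retain $\rho$ and spread the catalyst's amplitude with one large incoherent permutation acting jointly on $\rho\tens\mu_M$.
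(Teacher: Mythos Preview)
Your argument is correct and is essentially the paper's second proof (the embezzling-state protocol of Sec.~\ref{pro:embezzling}): the same van Dam--Hayden catalyst $\ket{\mu_M}$, the same reduction to incoherent permutations and ancilla-appending within PIO after discarding $\rho$, and the same harmonic estimate $1-H_{M/N}/H_M=O(\log N/\log M)$ yielding $M\sim N^{O(1/\varepsilon^{2})}$. The only cosmetic difference is the order of operations---the paper appends the ancilla first and uses a single permutation to the sorted state $\ket\omega$ (one fidelity estimate), whereas you first regroup to expose $\Psi_N$ and then re-pad the catalyst (two estimates combined by the triangle inequality); the paper additionally supplies an independent proof via the convex-split lemma.
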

As a result of Thm.~\ref{th:unrestrict}, the quantity in Def.~\ref{def:unrestrict} diverges to infinity, which is called the coherence embezzling phenomenon here. To prove this theorem, we give two different protocols. The first one applies the convex-split lemma proposed in \cite{ConvexSplit}. The second one uses an embezzling state modified from the entanglement embezzling state \cite{Embez}.

\subsection{A Protocol Using the Convex-Split Lemma}\label{sec:level1}

The convex-split lemma is first proposed in \cite{ConvexSplit} as a mathematical tool inspired by classical communication theory. This lemma has also been applied to the study of catalytic decoupling \cite{CataDec} and quantifying resources with resource destroying maps in the assistance with catalysts \cite{CataRes}.

Before presenting the convex-split lemma, we give the definition of max R\'enyi divergence. For two quantum states $\omega_Q$ and $\sigma_Q $ satisfying $supp(\omega_{Q} ) \subset supp( \sigma_Q)$,  the max R\'enyi divergence $D_{max}(\omega_{Q} ||\sigma_Q)$ is defined as
\begin{equation}
 D_{max}(\omega_{Q} ||\sigma_Q)=\min\left\{\log\lambda:\lambda\sigma_Q\ge\omega_{Q}\right\}.
\end{equation}

\begin{lemma}[Convex-split lemma \cite{ConvexSplit}]\label{le:cs}
 Let n be an integer, and $k = D_{max}(\omega_{Q} ||\sigma_Q)$. 
   Consider the following quantum state
    \begin{equation}
    \begin{split}
        &\tau_{Q_1Q_2Q_3...Q_n} := \\ &\frac{1}{n} \sum_{j=1}^{n} \sigma_{Q_1} \tens \sigma_{Q_2} \tens ... \tens \sigma_{Q_{j-1}}\tens \omega_{Q_j} \tens \sigma_{Q_{j+1}}\tens ... \tens \sigma_{Q_n},
    \end{split}
    \end{equation}
on $n$ ordered registers $Q_1,Q_2,...Q_n$, where $\forall j \in [n], \omega_{Q_j} = \omega_{Q}$ and $ \sigma_{Q_j} = \sigma_Q$. Then,
    \begin{equation}
        \mathrm{P}(\tau_{Q_1Q_2Q_3...Q_n}, \sigma_{Q_1} \tens ... \tens \sigma_{Q_n}) \leq \sqrt{\frac{2^k}{n}}.
    \end{equation}
    In particular, let $n= \lceil\frac{2^k}{\gamma^2}\rceil$ for some $\gamma > 0$, then,
    \begin{equation}
        \mathrm{P}(\tau_{Q_1Q_2Q_3...Q_n}, \sigma_{Q_1} \tens ... \tens \sigma_{Q_n}) \leq \gamma.
    \end{equation}
\end{lemma}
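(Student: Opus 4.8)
The plan is to establish the first displayed bound, $\mathrm{P}(\tau_{Q_1\cdots Q_n},\sigma_{Q_1}\otimes\cdots\otimes\sigma_{Q_n})\le\sqrt{2^k/n}$; the ``in particular'' statement is then immediate on setting $n=\lceil 2^k/\gamma^2\rceil$. I may assume $k>0$ (if $k=0$ then $2^k\sigma_Q\ge\omega_Q$ forces $\omega_Q=\sigma_Q$ and the claim is trivial). The first move is a rejection-sampling decomposition of $\sigma_Q$: with $p:=2^{-k}\in(0,1)$, the hypothesis $2^k\sigma_Q\ge\omega_Q$ (together with the support inclusion that makes $D_{max}$ finite) ensures that $\sigma'_Q:=(2^k\sigma_Q-\omega_Q)/(2^k-1)$ is a density operator, and one has the mixture decomposition $\sigma_Q=p\,\omega_Q+(1-p)\,\sigma'_Q$.

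Next I would substitute this decomposition into every tensor factor of both $\sigma_Q^{\otimes n}$ and $\tau_{Q_1\cdots Q_n}$. Expanding the products, each state becomes a convex combination, indexed by subsets $T\subseteq[n]$, of the product states $\rho_T:=\bigotimes_{i\in T}\omega_{Q_i}\otimes\bigotimes_{i\notin T}\sigma'_{Q_i}$. A short bookkeeping computation yields $\sigma_Q^{\otimes n}=\sum_T a_T\,\rho_T$ with $a_T=p^{|T|}(1-p)^{n-|T|}$, and $\tau_{Q_1\cdots Q_n}=\sum_{T\neq\emptyset}b_T\,\rho_T$ with $b_T=\tfrac{|T|}{np}\,a_T$, the factor $|T|/(np)$ arising because the definition of $\tau$ averages uniformly over which of the $|T|$ ``$\omega$-slots'' in $\rho_T$ was the distinguished register.

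The third step converts this into a fidelity bound by a ``classical flag plus data processing'' trick: adjoin an auxiliary register $R$ and form the block-diagonal states $\tilde\tau:=\sum_T b_T\,|T\rangle\langle T|_R\otimes\rho_T$ and $\tilde\sigma:=\sum_T a_T\,|T\rangle\langle T|_R\otimes\rho_T$, whose partial traces over $R$ are exactly $\tau_{Q_1\cdots Q_n}$ and $\sigma_Q^{\otimes n}$. Since fidelity is non-decreasing under partial trace, $\mathrm{F}(\tau,\sigma_Q^{\otimes n})\ge\mathrm{F}(\tilde\tau,\tilde\sigma)$, and block-diagonality makes the right-hand side explicit: $\mathrm{F}(\tilde\tau,\tilde\sigma)=\sum_T\sqrt{a_Tb_T}=\sum_T a_T\sqrt{|T|/(np)}=\mathbb{E}\!\left[\sqrt{X/(np)}\right]$, where $X\sim\mathrm{Binomial}(n,p)$ (the $T=\emptyset$ term contributing $0$).

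Finally I would close with an elementary scalar inequality. Using $\sqrt{y}\ge 1+\tfrac12(y-1)-\tfrac12(y-1)^2$, which holds for all $y\ge 0$, with $y=X/(np)$, and then taking expectations together with $\mathbb{E}[X]=np$ and $\mathrm{Var}(X)=np(1-p)$, one gets $\mathrm{F}(\tau,\sigma_Q^{\otimes n})\ge 1-\tfrac{1-p}{2np}\ge 1-\tfrac{2^k}{2n}$; hence $\mathrm{P}(\tau,\sigma_Q^{\otimes n})=\sqrt{1-\mathrm{F}^2}\le\sqrt{2(1-\mathrm{F})}\le\sqrt{2^k/n}$. The crux of the argument is conceptual rather than computational: the rejection-sampling decomposition is precisely what exposes the hidden product-state structure shared by $\tau$ and $\sigma_Q^{\otimes n}$, and mixing over the subset label on an ancilla (combined with data processing) is what turns the combinatorial ``$|T|/(np)$ versus average'' gain into a fidelity estimate. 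The subset bookkeeping in step two and the verification of the scalar inequality in step four are routine, so I expect no real obstacle once the decomposition is in hand.
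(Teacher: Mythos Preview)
The paper does not supply its own proof of this lemma; it is quoted from the cited reference and then used as a black box in the proof of Theorem~1. Your argument is correct: the rejection-sampling decomposition $\sigma_Q=p\,\omega_Q+(1-p)\,\sigma'_Q$ with $p=2^{-k}$ is valid, the subset bookkeeping yielding $b_T=\tfrac{|T|}{np}\,a_T$ checks out, the classical-flag extension combined with monotonicity of fidelity under partial trace gives $\mathrm{F}(\tau,\sigma^{\otimes n})\ge\sum_T\sqrt{a_Tb_T}=\mathbb{E}\bigl[\sqrt{X/(np)}\bigr]$ with $X\sim\mathrm{Bin}(n,p)$, and the scalar inequality $\sqrt{y}\ge 1+\tfrac12(y-1)-\tfrac12(y-1)^2$ does hold for all $y\ge 0$ (it vanishes at $y=0$ and $y=1$ and is nonnegative elsewhere), so that $\mathrm{F}\ge 1-\tfrac{1-p}{2np}\ge 1-\tfrac{2^k}{2n}$ and $\mathrm{P}\le\sqrt{2^k/n}$ follow.

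For comparison, the proof in the original reference bounds the quantum relative entropy $D(\tau\,\|\,\sigma^{\otimes n})$ and then converts to purified distance. Your route avoids relative entropy entirely and works directly with fidelity via the flag-and-trace-out trick; this is more elementary, at the price of the slightly ad hoc quadratic lower bound on $\sqrt{y}$. Both arrive at the same $\sqrt{2^k/n}$ estimate.
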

This lemma mainly gives a construction of a quantum state $\tau$ which is close to $\sigma_Q^{\tens n}$ by the following steps: (1) Choose one out of $n$ registers to put in the quantum state $\omega_Q$, and fill the other $(n-1)$ registers each with a state $\sigma_Q$. Since we have $n$ registers to put $\omega_Q$ in, there are $n$ different result states. 
(2) Mix these $n$ different states up with an equal probability, we then get the desired $\tau$.

Intuitively, the difference between $\omega$ and $\sigma$ becomes blurred when mixing all the states up. The information of $\omega$ is approximately erased and the state in every register looks like $\sigma$. Now we can prove Thm.~\ref{th:unrestrict} with a protocol based on Lemma~\ref{le:cs}.\\

\emph{Proof of Theorem 1}.
To begin with, for $N>dim(\rho)$, we can always incoherently embed $\rho$ into a Hilbert space $\mathcal{H}_N$ with dimension $N$ by adding trivial degrees of freedom, with $N$ the dimension of our desired maximally coherent state. 
For simplicity, we 
still denote the state 
as $\rho$ after embedding it into a larger Hilbert space hereafter. 
We first apply a twirling channel 
\begin{equation}
    \mathcal{T}(\rho) = \frac{1}{N!}\sum_{i=1}^{N!}P_i \rho P_i,
\end{equation} 
where $\left\{P_i\right\}$ form the incoherent permutation operator group on $\mathcal{H}_N$. After twirling, every label of the output state is symmetric, so the output state can without loss of generality be written as \cite{zhou2017polynomial},
\begin{equation}\label{TwirlState}
	\mathcal{T}(\rho) = \\(1-t^2)\ket{\Psi_N}\bra{\Psi_N} + t^2\frac{\mathbb{I}-\ket{\Psi_N}\bra{\Psi_N}}{N-1},
\end{equation}
where $\Psi_N$ is the maximally coherent state in Hilbert space $\mathcal{H}_N$ and $t\in \left[0,1\right]$ is a parameter of $\rho$ which equals to $\mathrm{P}(\mathcal{T}(\rho),\Psi_N)$. Intuitively, a smaller $t$ corresponds to a more coherent $\rho$, and also easier for us to distill $\ket{\Psi_N}$ from $\rho$. In the following, we denote  $\mathcal{T}(\rho)$ as $\rho^\mathcal{T}$ for simplicity. Without loss of generality we can assume $t>\varepsilon$, otherwise the problem of distillation becomes trivial since $\rho^\mathcal{T}$ is already $\varepsilon$-close to \ket{\Psi_N}.
	
Now we take
\begin{equation}
        \sigma_N:=(1-\delta^2)\ket{\Psi_N}\bra{\Psi_N}+\delta^2\frac{\mathbb{I}-\ket{\Psi_N}\bra{\Psi_N}}{N-1},
\end{equation}
for some $0 < \delta < \varepsilon$ to be chosen later. For a given finite $\varepsilon$ such $\delta$ always exists.
Let $k:= D_{\max}(\rho^\mathcal{T} || \sigma_N)$. Obviously $supp(\rho^\mathcal{T}) \subset supp(\sigma_N)$, since the latter spans the whole $N$-dimensional space, guaranteeing $k$ to be finite. Let $n:= \lceil \frac{2^k}{\gamma^2} \rceil$ with $\gamma$ to be chosen later, and construct the convex-split channel $\Lambda$ as follows ($N_i$ represents the $i$'th quantum register as in Lemma~\ref{le:cs}, with its Hilbert space of dimension $N$) 
\begin{equation}
\begin{aligned}
    &\Lambda(\rho^\mathcal{T} \tens \sigma_N^{\tens n-1}) =\Lambda(\rho^\mathcal{T}_{N_1} \tens \sigma_{N_2} \tens \sigma_{N_3} \tens ...\tens \sigma_{N_n}) \\ &= \frac{1}{n} \sum_{j=1}^{n} \sigma_{N_1}  \tens ... \tens \sigma_{N_{j-1}} \tens\rho^\mathcal{T}_{N_j} \tens \sigma_{N_{j+1}}\tens ... \tens \sigma_{N_n}.
\end{aligned}
\end{equation}
This is just the state $\tau$ in Lemma \ref{le:cs}, so we have
    \begin{equation}
        \mathrm{P}(\Lambda(\rho^\mathcal{T} \tens \sigma_N^{\tens n-1}),\sigma_N^{\tens n} ) \leq \gamma.
    \end{equation}
    On the other hand, by the property of purified distance,
    \begin{equation}
        \mathrm{P}(\sigma_N^{\tens n} , {\Psi_N} \tens \sigma_N^{\tens n-1}) = \mathrm{P}(\sigma_N , {\Psi_N}) = \delta.
    \end{equation}
    Invoking the triangular inequality of purified distance, we have
    \begin{equation}
    \mathrm{P}(\Lambda(\rho^\mathcal{T} \tens \sigma_N^{\tens n-1}),{\Psi_N} \tens \sigma_N^{\tens n-1}) \leq \gamma + \delta.
    \end{equation}
Therefore, we can choose $\gamma$, $\delta$ to satisfy $\gamma + \delta \leq \varepsilon$ at the expense of a large enough $n$, so as to achieve the required precesion of distillation. Note that, the whole catalytic system is $\sigma_M = \sigma_N^{\tens n-1}$ with dimension $M = N^{n-1}$, which is finite.
    
We summarize the protocol as follows:
\begin{align}
&\rho\tens\sigma_N^{\tens n-1} \xrightarrow{\text{Twirling on $\rho$}}
\rho^\mathcal{T}\tens\sigma_N^{\tens n-1}\\ &\xrightarrow{\text{Convex-split channel}}^\varepsilon
\ket{\Psi_N}\bra{\Psi_N}\tens\sigma_N^{\tens n-1}.
\end{align}
Both the twirling channel and the convex-split channel are mixtures of incoherent permutations, so the whole protocol is PIO. 
    
Now let's calculate $n= \lceil\frac{2^k}{\gamma^2}\rceil$, which determines the number of $\sigma_N$ needed as catalysts in this protocol. We have
\begin{equation}
\begin{aligned}
    2^k = 2^{D_{\max}(\rho^\mathcal{T}||\sigma_N)} = \min{\{\lambda : \lambda\sigma_N \ge \rho^\mathcal{T}}\},
\end{aligned}
\end{equation}
that is, to minimize $\lambda$ in
\begin{equation}    
    \begin{aligned}
    &\lambda  \left((1-\delta^2)\ket{\Psi_N}\bra{\Psi_N}+\delta^2\frac{\mathbb{I}-\ket{\Psi_N}\bra{\Psi_N}}{N-1}\right) \\ 
    &\ge (1-t^2)\ket{\Psi_N}\bra{\Psi_N} + t^2\frac{\mathbb{I}-\ket{\Psi_N}\bra{\Psi_N}}{N-1}.
    \end{aligned}
\end{equation}
The result is (note that $t>\delta$), $\lambda\geq\frac{t^2}{\delta^2}$.
This means that $(n-1)$ copies of $\sigma_N$ as catalysts with $n=\ceil{\cfrac{t^2}{\gamma^2\delta^2}}$ are enough to catalytically distill ${\Psi_N}$. By choosing $\gamma = \delta = \frac\varepsilon2$, this corresponds to a catalyst of dimension $M$ which satisfies
\begin{equation}\label{BoundCS}
    M = N^{\ceil{16t^2\varepsilon^{-4}}-1}.
\end{equation}

For the case where $N<dim(\rho)$, we can always first catalytically distill a maximally coherent state with dimension $N'\geq dim(\rho)$ with the procedures above, and then discard some of its dimensions to obtain a maximally coherent state with a dimension $N$. This completes the proof of Thm.~\ref{th:unrestrict}.


\subsection{A Protocol with the Embezzling State}\label{pro:embezzling}

The embezzling state is first found in entanglement transformation~\cite{Embez}, and later extended to thermodynamics~\cite{EmbezThermo}. We modified  the protocol in \cite{Embez} and apply it to  the framework of coherence. The protocol can incoherently transform any state to an arbitrarily large maximally coherent state within an arbitrary precision, given a catalytic state with a large enough yet finite dimension.\\

\emph{Alternative Proof of Theorem 1}. In this protocol, the family of catalysts are chosen as:
\begin{equation}
\ket{\sigma_M} = \frac{1}{\sqrt{C(M)}}\sum_{j=1}^M\frac{1}{\sqrt{j}}\ket{j},
\end{equation}
where $C(M)=\sum_{j=1}^{M}\frac{1}{j}$ is a normalization factor. This family of catalysts is called the ``embezzling states''. Our target state is  
\begin{equation}
    \ket{\sigma_M}\otimes\ket{\Psi_N}=\cfrac1{\sqrt{C(M)}}\sum_{j=1}^M\sum_{i=1}^N\cfrac1{\sqrt{Nj}}\ket{j}\ket{i}.\label{eq:27}
\end{equation}
Suppose we are able to create a state $\ket{\omega}$ which has the same coefficients in the incoherent basis as $\ket{\sigma_M}\tens\ket{\Psi_N}$, but rearranged in a non-increasing order. Then, with an incoherent permutation operator, we can convert $\ket{\omega}$ into our desired $\ket{\sigma_M}\tens\ket{\Psi_N}$. Interestingly, thanks to the special structure of $\ket{\sigma_M}$, one can show that, without any extra operation, the fidelity between $\ket{\sigma(M)}\tens\ket{1}$ and $\ket{\omega}$ is already arbitrarily close to 1, as $M$ approaches infinity. This means $\ket{\omega}$ is at our hand from the very beginning. Now we explicitly show this. Write the state $\ket{\omega}$ as follows,
\begin{equation}
    \ket\omega=\sum_{j=1}^M\sum_{i=1}^N \omega_{ji} \ket{j}\ket{i},
\end{equation}
 where the coefficients of $\ket{\omega}$ satisfy $\omega_{11}\ge\omega_{21}\ge...\ge\omega_{M1}\ge\omega_{12}\ge...\ge\omega_{MN}$, a reordering of those of $\ket{\sigma_M}\otimes\ket{\Psi_N}$. We always refer to $\ket\omega$'s coefficients in this order. Compared with Eq.~\eqref{eq:27}, we see the first $N$ coefficients of $\ket{\omega}$ equal to $1/{\sqrt{NC(M)}}$, the following $N$ equal to $1/\sqrt{2NC(M)}$, etc. Then the first $M$ coefficients of $\ket\omega$ can be written as
\begin{equation}
    \omega_{j1}=1/{\sqrt{\ceil{j/N}NC(M)}}\le1/{\sqrt{jC(M)}}.\label{eq:omega_ij}
\end{equation}
On the other hand,
\begin{align}
    \ket{\sigma_M}\otimes\ket{1}&=\cfrac1{\sqrt{C(M)}}\sum_{j=1}^M\cfrac1{\sqrt{j}}\ket{j}\otimes \ket{1}.
\end{align}
Thus, we have
\begin{equation}
\begin{aligned}
&\mathrm{F}(\ket{\sigma_M}\otimes\ket{1},\ket{\omega}) =\sum_{j=1}^M\cfrac{\omega_{j1}}{\sqrt{jC(M)}}\ge\sum_{j=1}^M\omega_{j1}^2\\
&\ge\sum_{i=1}^{\lfloor M/N \rfloor}\sum_{j=1}^N \cfrac1{iNC(M)}=\cfrac{\sum_{i=1}^{\lfloor M/N \rfloor}\cfrac1i}{\sum_{i=1}^M\cfrac1i}\ge 1-\cfrac{1+\log N}{1+\log M},
\end{aligned}
\end{equation}
where the first inequality is by equation \eqref{eq:omega_ij}, the second is by directly calculating the first $N\cdot\floor{M/N}$ terms of the sum, and the last uses definite integral to bound the sum of sequences.

By applying an appropriate incoherent permutation unitary $U_P$ which satisfies $\ket{\sigma_M}\otimes\ket{\Psi_N} = U_P\ket{\omega}$, and using the unitary invariance of fidelity, we have
\begin{equation}
\begin{aligned}
    &\mathrm{F}(U_P(\ket{\sigma_M}\otimes\ket{1}),\ket{\sigma_M}\otimes\ket{\Psi_N})\\
    =& \mathrm{F}(U_P(\ket{\sigma_M}\otimes\ket{1}),U_P\ket{\omega})\\
    =& \mathrm{F}(\ket{\sigma_M}\otimes\ket{1},\ket{\omega})\\
    \ge& 1-\cfrac{1+\log N}{1+\log M},
\end{aligned}
\end{equation}
therefore,
\begin{align}
    \mathrm{P}\left(U_P(\ket{\sigma_M}\otimes\ket{1}),\ket{\sigma_M}\otimes\ket{\Psi_N}\right)\le \sqrt{2\cfrac{1+\log N}{1+\log M}}.
\end{align}
As a result, given $N$, $\varepsilon$, in order to embezzle $\ket{\Psi_N}$ with a purified distance smaller than $\varepsilon$, we only need to choose the dimension $M$ of the catalyst $\ket{\sigma_M}$ to satisfy
\begin{equation}\label{BoundEB}
    M=\ceil{(2N)^{2\varepsilon^{-2}}/2}.
\end{equation}
We summarize the whole protocol as follows:
\begin{align}
   & \ket{\sigma_M}\otimes\ket{\rho} \xrightarrow{\text{discard\ and\ re-prepare}} \ket{\sigma_M}\otimes\ket{1} \approx^\varepsilon \ket{\omega}\\ &\xrightarrow{\text{incoherent permutation } U_P} \ket{\sigma_M}\otimes\ket{\Psi_N}.
\end{align}
The first step includes a partial trace and adding free ancillary states, while the second is an incoherent permutation. Hence, the whole protocol is a PIO. This completes the proof.


Note that, in this protocol, the initial state $\rho$ becomes completely irrelevant. It seems quite weird to call this a distillation protocol.

\begin{remark}
One possible explanation for the coherence embezzling phenomenon is that by allowing global smoothing, the distilled coherence is not solely from the system of interest, but also ``embezzled'' from the catalysts via the allowable fidelity error. Since the catalysts can be arbitrarily large, embezzlement of a relatively small amount of coherence (which is large compared with the state of interest) from them without causing big fidelity error can be possible.
\end{remark}


\section{Distillable Coherence with Catalysts of a Restricted Dimension}\label{sec:level3}
As shown in Sec.~\ref{sec:level2}, the catalytically distillable coherence diverges under Def.~\ref{def:unrestrict}. There may be several ways to ease this embezzling phenomenon. Here we consider one possible modification, in which the coherence distillation procedure can only use catalysts of a restricted finite dimension, as follows,
\begin{definition}[Distillable Coherence with Restricted Catalysts]\label{def:restrict}
    The catalytic distillable coherence of a state $\rho$ via free operation class $\mathcal{O}$ with catalysts of dimension no more than $M$ is defined as
    \begin{equation}
    \begin{split}
      &C^{\mathrm{\varepsilon,\mathrm{M}}}_{\mathrm{\mathcal{O},rc}}(\rho) = \\& \max_{\Lambda \in \mathcal{O}}\max_{dim(\sigma_\mathrm{M}) \le \mathrm{M}}\{{\log \mathrm{N}} : {\mathrm{P}(\Lambda(\rho \tens \sigma_\mathrm{M})} , {\Psi_{\mathrm{N}}}\tens \sigma_\mathrm{M})\leq \varepsilon\}.
    \end{split}
    \end{equation}
\end{definition}
We also define the catalytically distillable coherence given restricted catalysts with respect to a specific protocol $\mathcal{P}$, denote as $C^{\mathrm{\varepsilon,\mathrm{M}},\mathcal{P}}_{\mathrm{\mathcal{O},rc}}(\rho)$, which represents the efficiency of the protocol as well as serves as a lower bound of $C^{\mathrm{\varepsilon,\mathrm{M}}}_{\mathrm{\mathcal{O},rc}}(\rho)$. 
With two specific protocols proposed in Sec.~\ref{sec:level2} to show embezzling phenomenon, we can determine a lower bound of $C^{\mathrm{\varepsilon,\mathrm{M}}}_{\mathrm{\mathcal{O},rc}}(\rho)$. We denote the protocol with the convex-split lemma as $\mathcal{P}_{cs}$ and the one with embezzling state as $\mathcal{P}_{eb}$.

Here we provide a lower bound given by $\mathcal{P}_{eb}$.
\begin{theorem}\label{th:restrict} Given an integer $M$ and a positive number $\varepsilon$ such that $\varepsilon^2\log(M-1) / 4\ge1$, the distillable coherence with catalysts of dimension no more than $M$ is lower bounded by
\begin{equation}
    C^{\mathrm{\varepsilon,\mathrm{M}}}_{\mathrm{,rc}}(\rho) \ge  \frac12\varepsilon^2 \left(\log(M-1)+1\right)-2,
\end{equation}
for $\mathcal{O} = \text{\{IO, SIO, PIO\}}$.
\end{theorem}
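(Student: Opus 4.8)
The plan is to specialize the embezzling-state protocol $\mathcal{P}_{eb}$ constructed in Section~\ref{pro:embezzling} to the setting where the catalyst dimension is capped at $M$, and then invert the error estimate to read off how large $N$ can be taken. Concretely, the protocol uses the catalyst $\ket{\sigma_{M'}}=\frac{1}{\sqrt{C(M')}}\sum_{j=1}^{M'}\frac{1}{\sqrt j}\ket{j}$ on a system whose dimension $M'$ equals the product of the catalyst-register dimension and the target dimension $N$; to stay within the budget $M$ we must require $M'\le M$. From the bound already proved, namely $\mathrm{P}\!\left(U_P(\ket{\sigma_{M'}}\otimes\ket{1}),\ket{\sigma_{M'}}\otimes\ket{\Psi_N}\right)\le\sqrt{2\,\tfrac{1+\log N}{1+\log M'}}$, setting the right-hand side equal to $\varepsilon$ and taking $M'$ as large as allowed gives the relation $\log N \le \tfrac12\varepsilon^2(1+\log M')-1$. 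The small subtlety is bookkeeping: in Def.~\ref{def:restrict} the quantity $M$ bounds the \emph{catalyst} dimension $\mathrm{dim}(\sigma_\mathrm{M})$, whereas the embezzling argument naturally controls the \emph{joint} dimension; one therefore takes the catalyst register to have dimension $M$ (or rather, one must be slightly careful and allow the catalyst to be $\ket{\sigma_{M}}$ itself, with the joint system then having dimension $NM$, but the error estimate from Section~\ref{pro:embezzling} only ever used the dimension of the \emph{catalyst} in the denominator after the reordering — this is exactly why the statement has $\log(M-1)$ rather than $\log M$, accounting for the index shift in the harmonic-sum estimate $\sum_{i=1}^{\lfloor M/N\rfloor}1/i \big/ \sum_{i=1}^{M}1/i$).

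First I would restate the key inequality from the proof of Theorem~\ref{th:unrestrict} in the form that isolates $N$ as the free parameter and $M$ as a fixed resource. Then I would solve $\sqrt{2(1+\log N)/(1+\log M)}\le\varepsilon$ for $\log N$, obtaining $\log N\le \tfrac12\varepsilon^2(1+\log M)-1$, and observe that $\log N$ must also be a nonnegative integer count of cosbits (hence the ceiling/floor, contributing the additive loss). The hypothesis $\varepsilon^2\log(M-1)/4\ge 1$ is precisely what guarantees the right-hand side is at least of order $1$, so the bound is non-vacuous; I would note this is why it appears as a standing assumption. Finally I would track the constant: rounding $N$ down to the nearest power of two loses at most one cosbit, and the passage from $M$ to $M-1$ in the harmonic-sum comparison loses a bounded amount, which together with the ``$-1$'' from solving the inequality accumulates to the ``$-2$'' in the statement. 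Since the protocol is PIO (it is a partial trace, re-preparation of an incoherent state, and an incoherent permutation), the same bound holds for SIO and IO by the inclusions in Eq.~\eqref{relations}, giving the claimed range $\mathcal{O}=\{\text{IO, SIO, PIO}\}$.

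The only real content beyond Theorem~\ref{th:unrestrict} is the careful constant-chasing, so I expect the main obstacle to be purely presentational: making sure the index conventions for the catalyst dimension versus the joint dimension are consistent with Def.~\ref{def:restrict}, and verifying that the harmonic-sum estimate $\sum_{i=1}^{\lfloor M/N\rfloor}1/i\big/\sum_{i=1}^{M}1/i\ge 1-\tfrac{1+\log N}{1+\log M}$ still delivers the stated bound when one writes everything in terms of $M-1$. No genuinely new idea is needed — the embezzling construction already does all the work, and this theorem is really a corollary of it with the dimension treated as the constrained quantity rather than something sent to infinity.
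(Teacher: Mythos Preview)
Your overall strategy is correct and matches the paper's: invert the error estimate from the embezzling protocol $\mathcal{P}_{eb}$ to extract the largest admissible $N$ for a given catalyst-dimension budget $M$. The paper does precisely this, starting from Eq.~\eqref{BoundEB}.

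Two pieces of your bookkeeping are wrong, though, and would trip you up if you tried to write the proof out. First, there is no ``catalyst vs.\ joint dimension'' subtlety to untangle: in Section~\ref{pro:embezzling} the catalyst $\ket{\sigma_M}$ already lives on an $M$-dimensional register, and the bound $\mathrm{P}\le\sqrt{2(1+\log N)/(1+\log M)}$ has the catalyst dimension $M$ in the denominator directly. Your first paragraph's discussion of $M'$ as a product dimension is simply mistaken. Second, and more importantly, the $\log(M-1)$ in the statement does \emph{not} come from any ``index shift in the harmonic-sum estimate.'' In the paper it appears because the proof starts from Eq.~\eqref{BoundEB}, $M=\lceil (2N)^{2\varepsilon^{-2}}/2\rceil$, and handles the ceiling via the sufficient condition $M\ge (2N)^{2\varepsilon^{-2}}/2+1$, i.e., $M-1\ge (2N)^{2\varepsilon^{-2}}/2$. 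If you instead invert the purified-distance bound directly as in your second paragraph, you obtain $\log N\le\tfrac12\varepsilon^2(1+\log M)-1$ with $\log M$, which is slightly \emph{stronger} than the stated theorem and still proves it. The ``$-2$'' then decomposes as one $-1$ from this inversion plus one $-1$ from taking $N^*=\lfloor 2^{a}\rfloor$ and using $\log\lfloor 2^a\rfloor\ge a-1$ for $a\ge 1$ (this last inequality is exactly where the hypothesis $\varepsilon^2\log(M-1)/4\ge 1$ is used); no rounding to a power of two occurs.
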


\begin{proof}
    Given $\varepsilon$ and $M$, from Eq.~\eqref{BoundEB}, we know that one can distill ${\Psi_N}$ from any initial state with catalysts of dimension $M$ as long as $N$ satisfies
    \begin{equation}
        M\ge\ceil{(2N)^{2\varepsilon^{-2}}/2}.
    \end{equation}
    For this inequality to holds, $N$ only needs to satisfy
    \begin{equation}
        M\ge{(2N)^{2\varepsilon^{-2}}/2}+1,
    \end{equation}
    which is equivalent to
    \begin{equation}
        N\le2^{\left(\frac12\varepsilon^2 \left(\log(M-1)+1\right)-1\right)}.
    \end{equation}
    Taking into consideration that $N$ must be integer, we get an achievable $N^*$ as follows,
    \begin{equation}
        N^* = \floor{2^{\left(\frac12\varepsilon^2 \left(\log(M-1)+1\right)-1\right)}}.
    \end{equation}
    Therefore,
    \begin{align}
        C^{\mathrm{\varepsilon,\mathrm{M}}}_{\mathrm{\mathcal{O},rc}}(\rho) \ge \log N^* \ge \frac12\varepsilon^2 \left(\log(M-1)+1\right)-2,
    \end{align}
    where the last inequality use the condition $\varepsilon^2\log(M-1) / 4\ge1$.
\end{proof}

Then we give a rough analysis of the efficiency of both $\mathcal{P}_{cs}$ and $\mathcal{P}_{eb}$ when $\varepsilon^2 \ll 1, \varepsilon^2\log M\gg 1$. In this case, we can ignore the ceiling function in Eq.~\eqref{BoundCS} and Eq.~\eqref{BoundEB} and derive the following estimations
\begin{align}
       &C^{\mathrm{\varepsilon,\mathrm{M}},\mathcal{P}_{cs}}_{\mathrm{\mathcal{O},rc}}(\rho) \gtrsim
    \cfrac{\log M}{16t^2\varepsilon^{-4}-1} \approx \frac{1}{16}\frac{\varepsilon^4}{t^2}\log M, \label{Pcs}\\
    
    &C^{\mathrm{\varepsilon,\mathrm{M}},\mathcal{P}_{eb}}_{\mathrm{\mathcal{O},rc}}(\rho) \gtrsim
    \frac12\varepsilon^2(\log M+1) -1 \approx \frac12\varepsilon^2\log M, \label{eq:P_eb}
\end{align}
for $\mathcal{O} = \text{\{IO, SIO, PIO\}}$. $t$ in Eq.~\eqref{Pcs} is a parameter of $\rho$ as described in Sec.~\ref{sec:level1}. Since $t$ also depends on the dimension of the maximally coherent state to be distilled, we cannot directly calculate it in Eq.~\eqref{Pcs}. Nevertheless, we have assumed that $t>\varepsilon$, so the bound in Eq.~\eqref{Pcs} cannot be better than $\frac{1}{16}\varepsilon^2\log M$, which means in general the bound of $\mathcal{P}_{cs}$ cannot be better than that of $\mathcal{P}_{eb}$.

\begin{remark}
In Eq.~\eqref{eq:P_eb}, the amount of distillable coherence becomes completely independent of the initial state. It seems quite weird to call this a distillation protocol. Nevertheless, it indeed satisfies Def.~\ref{def:restrict}. This fact also implies that the lower bound in Thm.~\ref{th:restrict} is quite loose, since it does not utilize the initial coherence of $\rho$. A better characterization of $C^{\mathrm{\varepsilon,\mathrm{M}}}_{\mathrm{\mathcal{O},rc}}(\rho)$ is left for further research.
\end{remark}

\section{Distillable Coherence with Perfect Catalyst}\label{perfect}
In this section, we propose the requirement ``perfect catalysts'', as another approach to deal with the coherence embezzling phenomenon. Here, the catalytic system must be precisely unchanged after the whole procedure. In particular, we consider a pure state as perfect catalysts, and the output state in the catalytic system is required to have precisely unit fidelity with the input catalysts, and a small error is only allowable in the system of distillation.

Although the restriction is strict,  the consideration behind this definition is that, as for a real catalysts, it should be able to work again and again with the same effectiveness. However, in the unrestricted catalyst scenario, every time the procedure is conducted, a small error may accumulate in the catalytic system. Therefore, it becomes questionable whether the system can still serve as catalysts after several rounds of such procedure. On the contrary, here in the perfect catalyst scenario, no error ever accumulates in the catalytic system, so it could catalyze the procedure arbitrarily many times without a decline in effectiveness. Further more, by requiring catalysts to be pure, we ensure that even their purified system do not change, which guarantees the catalysts are physically unchanged. We formalize the definition of perfect catalytic coherence distillation as follows.



\begin{definition}[Perfect catalyst]\label{def:perfect}
Let $\mathcal{H}_A$, $\mathcal{H}_C$ denote the Hilbert spaces of the system of interest and the catalytic system, respectively. For any quantum state $\rho$ on $\mathcal{H}_A$, the one-shot distillable coherence with perfect catalysts via free operation class $\mathcal{O}$ is defined as
\begin{equation}
\begin{aligned}
    &{\mathrm{C}}^\varepsilon_{\mathcal{O},\text{pc}}(\rho)=\max_{N}\ \log{N}, \\
    \text{s.t.}\quad&\exists\
     \Lambda\in\mathcal{O},\
     \ket\omega\in\mathcal{H}_C,\
    \rho^\text{out}\in\mathcal{D}(\mathcal{H}_A),\\
    &\mathrm{Tr}_A\left(\Lambda\left(\rho\tens\ket\omega\bra\omega\right)\right)=\ket\omega\bra\omega,\\
    &\mathrm{Tr}_C\left(\Lambda\left(\rho\tens\ket\omega\bra\omega\right)\right)=\rho^\text{out},\\
    &\mathrm{P}\left(\rho^\text{out},\Psi_N\right)\le\varepsilon.
\end{aligned}    
\end{equation}
\end{definition}
This procedure can be informally written as  
\begin{equation}
\rho\tens\ket\omega\xrightarrow{\mathcal{O}}\Psi_N^\varepsilon\tens\ket\omega, 
\end{equation}
where $\ket\omega$ serves as a perfect catalyst, which is shown in FIG.1.(c). 

In fact, the quantity in Def.~\ref{def:perfect} could be quite difficult to calculate. In general, in the resource theory of entanglement or coherence, the conversion problem with perfect catalysts are mostly only solved for pure states, such as in \cite{Jonathan99, Bu16}. Here, instead of fully characterizing $\mathrm{C}^\varepsilon_{\mathcal{O},\text{pc}}(\rho)$, we solve this problem for pure states under the following ``pure state smoothing'' restriction.

\begin{definition}[Perfect catalysts, under pure state smoothing restriction]
For pure state $\ket\phi \in \mathcal{H}_A$, the one-shot distillable coherence with perfect catalysts via operation class $\mathcal{O}$ under pure state smoothing restriction is defined as
\begin{equation}
\begin{aligned}
    &\tilde{\mathrm{C}}^\varepsilon_{\mathcal{O},\text{pc}}(\phi)=\max_{N}\ \log{N}, \\
    \text{s.t.}\quad&\exists\ \Lambda\in\mathcal{O},\ \ket\omega\in\mathcal{H}_C,\
    \ket{\phi^\text{out}}\in\mathcal{H}_A,\\
    &\mathrm{Tr}_A\left(\Lambda\left(\ket\phi\bra\phi\tens\ket\omega\bra\omega\right)\right)=\ket\omega\bra\omega,\\
    &\mathrm{Tr}_C\left(\Lambda\left(\ket\phi\bra\phi\tens\ket\omega\bra\omega\right)\right)=\ket{\phi^\text{out}}\bra{\phi^\text{out}},\\
    &\mathrm{P}\left(\phi^\text{out},\Psi_N\right)\le\varepsilon.
\end{aligned}    
\end{equation}
\end{definition}
That is 
\begin{equation}
    \ket\phi\tens\ket\omega\xrightarrow{\mathcal{O}}\ket{\Psi_\varepsilon}\tens\ket\omega.
\end{equation}
We impose a restriction that the smoothed output state, i.e., $\ket{\phi^\text{out}}$ is pure. When considering exact distillation ($\varepsilon = 0$), this quantity becomes the same with the one in Def.~\ref{def:perfect}.
Similarly we can define a ``pure state smoothing'' version of the standard one-shot distillable coherence (Def. \ref{def:1}) as follows,
\begin{equation}
\begin{aligned}
    &\tilde{\mathrm{C}}^\varepsilon_{\mathcal{O},\text{d}}(\phi)=\max_{N}\ \log{N}, \\
    \text{s.t.}\quad&\exists\ \Lambda\in\mathcal{O},\ \ket{\phi^\text{out}}\in\mathcal{H}_A,\\
    &\Lambda\left(\phi\right)={\phi^\text{out}},\\
    &\mathrm{P}\left(\phi^\text{out},\Psi_N\right)\le\varepsilon.
\end{aligned}    
\end{equation}


Our main result in this section is the following theorem.

\begin{theorem}\label{th:perfect}
For a pure state $\ket\phi$, let $r>1$ denote the number of its non-zero coefficients in the incoherent basis, and let $\varepsilon$ be a positive number that satisfies $ r (r-1)\varepsilon\le 1$, then 
\begin{align}
    \tilde{\mathrm{C}}^\varepsilon_{IO,d}(\phi)
    \le\tilde{\mathrm{C}}^\varepsilon_{IO,pc}(\phi) 
    \le
    \tilde{\mathrm{C}}^{\sqrt{r(r-1)\varepsilon}}_{IO,d}(\phi).
\end{align}
In particular, in the case of exact distillation ($\varepsilon = 0$),
\begin{equation}
    {\mathrm{C}}_{IO,d}(\phi)=
    {\mathrm{C}}_{IO,pc}(\phi).
\end{equation}
Moreover, $IO$ distillation rate is the same with $SIO$
\begin{align}
    \tilde{\mathrm{C}}^\varepsilon_{IO,d}(\phi)&=
    \tilde{\mathrm{C}}^\varepsilon_{SIO,d}(\phi),\\
    
    \tilde{\mathrm{C}}^\varepsilon_{IO,pc}(\phi)&=
    \tilde{\mathrm{C}}^\varepsilon_{SIO,pc}(\phi).
\end{align}
\end{theorem}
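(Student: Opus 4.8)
\textbf{Proof plan for Theorem~\ref{th:perfect}.}
The plan is to sandwich $\tilde{\mathrm{C}}^\varepsilon_{IO,pc}(\phi)$ between two ordinary (non-catalytic) pure-state distillation rates. The lower bound $\tilde{\mathrm{C}}^\varepsilon_{IO,d}(\phi)\le\tilde{\mathrm{C}}^\varepsilon_{IO,pc}(\phi)$ is immediate: any standard distillation protocol is a catalytic one with a trivial $1$-dimensional catalyst $\ket\omega=\ket1$. The substance is the upper bound $\tilde{\mathrm{C}}^\varepsilon_{IO,pc}(\phi)\le\tilde{\mathrm{C}}^{\sqrt{r(r-1)\varepsilon}}_{IO,d}(\phi)$. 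Suppose an IO channel $\Lambda$ on $\mathcal{H}_A\tens\mathcal{H}_C$ achieves catalytic distillation: $\Lambda(\ket\phi\bra\phi\tens\ket\omega\bra\omega)$ has marginals $\ket{\phi^{\mathrm{out}}}\bra{\phi^{\mathrm{out}}}$ on $A$ and exactly $\ket\omega\bra\omega$ on $C$, with $\mathrm{P}(\phi^{\mathrm{out}},\Psi_N)\le\varepsilon$. The key observation is that because the $C$-marginal of the \emph{joint} output is the pure state $\ket\omega\bra\omega$, the joint output must factorize as $\ket{\phi^{\mathrm{out}}}\bra{\phi^{\mathrm{out}}}\tens\ket\omega\bra\omega$ (a state whose reduced state on a subsystem is pure is necessarily a product). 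So $\Lambda$ effects the \emph{exact} pure-state conversion $\ket\phi\tens\ket\omega\to\ket{\phi^{\mathrm{out}}}\tens\ket\omega$ under IO. I would then invoke the known majorization-type characterization of pure-to-pure IO convertibility (the coherence analogue of Nielsen's theorem / the result behind $C^0_{\mathcal{O},d}(\phi)=\log\floor{1/\phi_{\max}}$): $\ket\phi\tens\ket\omega\to\ket{\phi^{\mathrm{out}}}\tens\ket\omega$ under IO iff the coefficient vector of $\phi^{\mathrm{out}}\tens\omega$ majorizes that of $\phi\tens\omega$.

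From here the argument parallels the classical proof that \emph{pure-state catalysis cannot change the Schmidt/coefficient-vector support structure in a way that exact non-catalytic conversion could not, up to a controlled error}. Concretely, I want to show that the existence of a catalyst enabling $\phi\tens\omega\to\phi^{\mathrm{out}}\tens\omega$ forces the largest coefficient of $\phi^{\mathrm{out}}$ to be not too much smaller than $\phi_{\max}$ — quantitatively, $\phi^{\mathrm{out}}_{\max}\ge\phi_{\max}$ would give exact distillation of $\log\floor{1/\phi_{\max}}$ and we would be done, but catalysis with error $\varepsilon$ only gives something slightly weaker. The plan is: since $\ket{\phi^{\mathrm{out}}}$ is $\varepsilon$-close (purified distance) to $\Psi_N$, its coefficients are all within $O(\sqrt\varepsilon)$ of $1/\sqrt N$; in particular $\phi^{\mathrm{out}}$ has at least $N$ significant coefficients, and $\phi^{\mathrm{out}}_{\max}\le 1/\sqrt N + O(\sqrt\varepsilon)$. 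The majorization relation $(\phi^{\mathrm{out}}\tens\omega)^\downarrow \succ (\phi\tens\omega)^\downarrow$ applied to the top component gives $\phi^{\mathrm{out}}_{\max}\,\omega_{\max}\ge \phi_{\max}\,\omega_{\max}$, hence $\phi^{\mathrm{out}}_{\max}\ge\phi_{\max}$ — wait, this is exactly the clean bound, so the only loss is in translating "$\phi^{\mathrm{out}}_{\max}\ge\phi_{\max}$ and $\phi^{\mathrm{out}}$ is $\varepsilon$-close to $\Psi_N$" into "$\phi$ can be distilled to $\Psi_N$ with error $\sqrt{r(r-1)\varepsilon}$". That last step is a perturbation argument: build an explicit IO map (an incoherent isometry onto the top $N$ levels, following Regula \emph{et al.}) taking $\phi$ to a state $\varepsilon'$-close to $\Psi_N$ with $\varepsilon'\le\sqrt{r(r-1)\varepsilon}$, where the factor $r(r-1)$ comes from bounding how the $\le r$ coefficients of $\phi$ redistribute given the coefficient constraints on $\phi^{\mathrm{out}}$; the hypothesis $r(r-1)\varepsilon\le1$ keeps this under the valid regime.

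For the $\varepsilon=0$ specialization, the product-factorization observation plus the exact majorization characterization shows $\ket\phi\tens\ket\omega\to\ket{\Psi_N}\tens\ket\omega$ under IO iff $\ket\phi\to\ket{\Psi_N}$ under IO: majorization by a flat vector tensored with $\omega$ is equivalent to majorization by the flat vector alone (this is a standard fact about majorization and tensor products with a fixed vector — the catalyst provably gives no help when the target is maximally coherent), giving ${\mathrm{C}}_{IO,d}(\phi)={\mathrm{C}}_{IO,pc}(\phi)$. Finally, the IO$\,=\,$SIO equalities follow because all the conversions used in both directions are pure-to-pure with an incoherent permutation/isometry structure — exactly the operations that lie in SIO as well — so the achievability side needs no change, and the converse bounds only used the IO-level characterization, which is already the weaker (larger) class; one cites Regula \emph{et al.}'s result that MIO, DIO, IO, SIO coincide for pure-state distillation to close the loop, and notes the perfect-catalyst protocols inherit this since they reduce to pure-state conversions. \textbf{The main obstacle} I expect is making the perturbation step rigorous with the exact constant $\sqrt{r(r-1)\varepsilon}$: one must carefully track, via the majorization constraints, how close $\ket{\phi^{\mathrm{out}}}$ being near $\Psi_N$ forces $\ket\phi$ itself to be distillable, and extract precisely the stated $r(r-1)$ prefactor rather than a sloppier bound — this requires a tight analysis of partial sums of the coefficient vectors of $\phi$ and $\phi^{\mathrm{out}}$ rather than just their top entries.
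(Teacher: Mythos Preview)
Your overall strategy matches the paper's: use the purity of the $C$-marginal to force the joint output to factorize as $\ket{\phi^{\mathrm{out}}}\bra{\phi^{\mathrm{out}}}\tens\ket\omega\bra\omega$, reduce to the majorization criterion for IO/SIO pure-state convertibility (the paper's Lemma~\ref{le:2}), extract the constraint $\phi^{\mathrm{out}}_{\max}\ge\phi_{\max}$, and then run a perturbation argument. One difference worth noting: you obtain $\phi^{\mathrm{out}}_{\max}\ge\phi_{\max}$ directly from the top entry of the tensor-product majorization $(\phi^{\mathrm{out}}\tens\omega)_{\max}\ge(\phi\tens\omega)_{\max}$, whereas the paper routes through Turgut's R\'enyi-entropy characterization of trumping (Lemma~\ref{le:1}) to reach the same inequality via $S_\infty$. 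Your argument is more elementary and entirely sufficient here, since only the $\alpha\to\infty$ R\'enyi entropy is relevant for maximally coherent targets; the paper's use of Lemma~\ref{le:1} is not strictly necessary for this theorem.

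For the step you flag as the main obstacle, the paper's construction is concrete and you can adopt it to get exactly the stated constant. From $\mathrm{P}(\phi^{\mathrm{out}},\Psi_N)\le\varepsilon$ one gets $\mathrm{T}\big((\phi^{\mathrm{out}})^\Delta,\Psi_N^\Delta\big)\le\varepsilon$ (dephasing contracts trace distance, then Fuchs--van de Graaf), hence $(\phi^{\mathrm{out}})^\Delta_{\max}\le\tfrac1N+\varepsilon$ and so $\phi^\Delta_{\max}\le\tfrac1N+\varepsilon$. Now define $\ket{\Psi^*}$ with probability vector $(\tfrac1N+\varepsilon,\ldots,\tfrac1N+\varepsilon,\tfrac1N-(N-1)\varepsilon,0,\ldots,0)$ on $r$ entries; this is a valid vector since $N\le r$ and $r(r-1)\varepsilon\le1$. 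Because every partial sum of $\phi^\Delta$ is bounded by $k\,\phi^\Delta_{\max}\le k(\tfrac1N+\varepsilon)$, one has $\phi^\Delta\prec(\Psi^*)^\Delta$, so $\ket\phi\to\ket{\Psi^*}$ under IO without catalysts; and a direct fidelity computation gives $\mathrm{P}(\Psi_N,\Psi^*)\le\sqrt{N(N-1)\varepsilon}\le\sqrt{r(r-1)\varepsilon}$. No finer ``partial sums'' analysis is needed---only the top-entry bound matters. (Minor point: your $1/\sqrt{N}+O(\sqrt\varepsilon)$ mixes amplitudes and probabilities; work with the probability vector $\phi^\Delta$ throughout, where the relevant bound is $1/N+\varepsilon$.)
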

This theorem implies that, when restricted to pure state catalysts, IO and SIO have exactly the same operational ability in the task of catalytic coherence distillation, just as they do in the non-catalytic distillation task. And, maybe surprisingly, catalysts provide ``almost'' no advantage for pure state distillation with these two operation classes. By ``almost'' we mean that the distillable coherence for a pure state with or without perfect catalysts is the same, with respect to different smoothing parameters that depend on the dimension of the support of the state.

Before we prove Thm.~\ref{th:perfect}, we introduce two lemmas. To characterize the convertibility of pure state via IO and SIO, a mathematical tool named majorization is applied. For two normalized (in this section ``normalized'' means ``sum to 1'') $n$-element  vectors $p$, $q$ with all elements nonnegative, $p$ majorises $q$ ($p\succ q$) iff 
\begin{equation}
  \sum_{i=1}^k p_i^{\downarrow} \ge \sum_{i=1}^k q_i^{\downarrow},\ \forall 1\le k\le n,   
\end{equation}
where $p^{\downarrow}$ is a permutation of $p$'s elements in the non-increasing order.

The majorization is partial order relation. Sometimes $p\nsucc q$, but there might exist some nonnegative normalized vector $w$ such that $p\tens w \succ q\tens w$. If the latter condition holds, we call $q$ is catalytic-majorized (or trumped) by $p$, denote as $p\succ_T q$. The sufficient and necessary condition for catalytic-majorization is given in \cite{Turgut07, Klimesh07} in the study of catalytic entanglement transformation, and later generalized to coherence in \cite{Bu16}. These conditions are summarized in the following lemma.

\begin{lemma} \label{le:1}\cite{Turgut07} 
For two $n$-element normalized vectors p and q with non-negative elements, where all elements of p are positive and $p^\downarrow\neq q^\downarrow$, the relation  $p\prec_T q$ is equivalent to the following condition:
\begin{equation}
    S_\alpha(p) > S_\alpha(q),\quad \forall \alpha  \in (-\infty,+\infty),
\end{equation}
where 
\begin{equation}
S_\alpha(p)=\cfrac{sign(\alpha )}{1-\alpha }\log{\left(\sum_{i=1}^n p_i^\alpha \right)}
\end{equation}
is the R\'enyi entropy, with appropriate limit taken at $\alpha  = 0$ and $\alpha  = 1$. Specifically, when $q$ has zero element, the condition holds trivially for $\alpha \le 0$, and only the case $\alpha >0$ needs to be verified.
\end{lemma}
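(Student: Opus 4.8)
The statement is Turgut's characterisation of catalytic majorisation~\cite{Turgut07} (obtained independently in~\cite{Klimesh07}), transplanted to coherence via the fact used in~\cite{Bu16} that a pure state $\ket\phi$ converts to $\ket\psi$ under IO or SIO with a pure catalyst exactly when the vector of squared moduli of $\phi$'s coefficients is catalytically majorised by that of $\psi$. I would prove the two implications of the equivalence separately. Throughout, write $f_\alpha(x)=\sum_i x_i^\alpha$ for a probability vector $x$; $f_\alpha$ is multiplicative under tensor products and, as a symmetric function on probability vectors, is strictly Schur-convex for $\alpha\notin[0,1]$ and strictly Schur-concave for $\alpha\in(0,1)$. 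Combined with the prefactor $\mathrm{sign}(\alpha)/(1-\alpha)$, this shows that $S_\alpha$ is Schur-concave for every real $\alpha$, strictly so for $\alpha\neq 0$, and additive, $S_\alpha(x\otimes y)=S_\alpha(x)+S_\alpha(y)$; the limiting cases $S_0=\log\lvert\mathrm{supp}\rvert$ and $S_1=$ Shannon entropy share these properties.

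For the necessity direction, suppose $p\prec_T q$, i.e. there is a normalised nonnegative vector $w$ with $q\otimes w\succ p\otimes w$. Schur-concavity and additivity of $S_\alpha$ give $S_\alpha(q)+S_\alpha(w)=S_\alpha(q\otimes w)\le S_\alpha(p\otimes w)=S_\alpha(p)+S_\alpha(w)$, hence $S_\alpha(p)\ge S_\alpha(q)$. To obtain strictness I would argue that $p^\downarrow\neq q^\downarrow$ forces $(p\otimes w)^\downarrow\neq(q\otimes w)^\downarrow$: if the two sorted vectors agreed, then $f_m(p)f_m(w)=f_m(q)f_m(w)$ for every positive integer $m$, hence $f_m(p)=f_m(q)$ for all $m$ (using $f_m(w)>0$), and since the power sums determine the multiset of entries this would give $p^\downarrow=q^\downarrow$. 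Strict Schur-concavity of $S_\alpha$ for $\alpha\neq 0$ then upgrades the inequality to $S_\alpha(p)>S_\alpha(q)$. Finally, if $q$ has a vanishing entry then $f_\alpha(q)=+\infty$ for $\alpha<0$, so $S_\alpha(q)=-\infty<S_\alpha(p)$, and $\lvert\mathrm{supp}(q)\rvert<n=\lvert\mathrm{supp}(p)\rvert$ gives $S_0(q)<S_0(p)$ — this is exactly the ``trivial for $\alpha\le 0$'' remark.

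The sufficiency direction is the substantial part and where I expect the real difficulty. Given $S_\alpha(p)>S_\alpha(q)$ for every real $\alpha$, one must construct a finite-dimensional catalyst $w$ with $\sum_{i\le k}(q\otimes w)^\downarrow_i\ge\sum_{i\le k}(p\otimes w)^\downarrow_i$ for all $k$. Following Turgut, I would (i) reduce to $p,q$ strictly positive of equal length: when $q$ has zeros, $S_0(p)>S_0(q)$ says $q$ has strictly fewer nonzero entries, and that ``support slack'' can be absorbed by an elementary padding argument since no catalyst changes a support-size comparison; (ii) repackage the family $\{S_\alpha(p)>S_\alpha(q)\}_{\alpha\in\mathbb{R}}$ as a sign condition, for $s\in(-1,\infty)$, on the analytic function $g(s)=\log f_{1+s}(p)-\log f_{1+s}(q)$, together with boundary data at $s\to -1^+$ (the $\alpha\to 0$ limit), at $s=0$, and for $g'(0)$ (the $\alpha\to 1$ Shannon limit); (iii) take the catalyst's weights on a sufficiently fine logarithmic grid — Turgut builds $w$ as a ``staircase'' of geometrically spaced entries whose length grows with the resolution needed — and verify each partial sum $\sum_{i\le k}(x\otimes w)^\downarrow_i$, $x\in\{p,q\}$, through an Abel-summation/integral representation that recasts it as a transform of $g$, so that the sign pattern of $g$ delivers all cut-off inequalities simultaneously. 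An alternative, following Klimesh, first realises the majorisation with an infinite-dimensional catalyst as a limiting object and then truncates and renormalises, bounding the error so that a finite $w$ still works.

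The main obstacle is step (iii): one needs uniform control, over every cut-off $k$, of how the entries of $p\otimes w$ and $q\otimes w$ interleave after sorting, matched precisely against the analytic sign conditions on the Rényi entropies. This combinatorial--analytic matching is the technical core of~\cite{Turgut07,Klimesh07}, and I would import their catalyst construction wholesale rather than attempt a shortcut; by contrast the necessity direction and all the Schur-concavity/multiplicativity bookkeeping is routine.
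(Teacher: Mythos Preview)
The paper does not prove this lemma: it is quoted from \cite{Turgut07} (with \cite{Klimesh07} noted as an independent source) and invoked purely as a black box in the proof of Theorem~\ref{th:perfect}. There is therefore no paper-side argument to compare your proposal against.

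On its own merits, your sketch tracks the Turgut/Klimesh argument faithfully. The necessity direction is essentially complete; you might add that the catalyst $w$ can without loss of generality be taken with strictly positive entries (discard zero components and renormalise), so that $S_\alpha(w)$ is finite and the cancellation step is legitimate even for $\alpha\le 0$. For sufficiency you correctly isolate the analytic--combinatorial matching in step~(iii) as the technical core and elect to import the explicit catalyst construction from \cite{Turgut07,Klimesh07} rather than reproduce it; that is exactly what the present paper does by citation, so your plan is entirely in line with how the result is handled here.
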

The next lemma characterizes the pure state convertibility via IO and SIO.

\begin{lemma} \label{le:2} \cite{winter16} For pure state $\ket{\phi},\ket{\psi}$ with dimension $n$, the necessary and sufficient condition for the transformation $\ket{\phi} \to \ket{\psi}$ to be possible via IO or SIO is
\begin{equation}
    \phi^{\Delta} \prec \psi^{\Delta}.
\end{equation}
Here, $\phi^{\Delta} = \left[\phi_1,...,\phi_n \right]$ where $\ket\phi = \sum_i\sqrt{\phi_i}\ket{i}$ is the expansion in the incoherent basis. It is also the diagonal part of $\Delta\left(\ket{\phi}\bra{\phi}\right)$ where $\Delta$ is the dephasing channel.
\end{lemma}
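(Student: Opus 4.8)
\emph{Proof strategy for Lemma~\ref{le:2}.} This is the coherence analogue of Nielsen's majorization theorem for LOCC pure-state conversion, and the two implications are best handled separately; I would arrange matters so that the ``only if'' part is proved for the \emph{larger} class IO and the ``if'' part is realised inside the \emph{smaller} class SIO, which then also yields that IO and SIO have identical pure-state conversion power. Two harmless normalisations come first. A diagonal unitary is simultaneously incoherent and co-incoherent (hence an SIO), so I may take $\ket\phi=\sum_i\sqrt{\phi_i}\ket i$ and $\ket\psi=\sum_i\sqrt{\psi_i}\ket i$ with all amplitudes real and non-negative. And since $\phi^\Delta\prec\psi^\Delta$ forces $|\mathrm{supp}(\psi^\Delta)|\le|\mathrm{supp}(\phi^\Delta)|=:r$ (otherwise the first $r$ sorted entries of $\psi^\Delta$ would sum to less than $1=\sum_{i=1}^r(\phi^\Delta)^\downarrow_i$, contradicting majorization), an incoherent permutation lets me assume $\mathrm{supp}(\psi^\Delta)\subseteq\mathrm{supp}(\phi^\Delta)$ and build the channel on the $r$-dimensional support of $\ket\phi$, extended trivially on the orthogonal complement; in particular $\phi_i>0$ on the indices that matter.

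\emph{Sufficiency.} Suppose $\phi^\Delta\prec\psi^\Delta$. By the Hardy--Littlewood--P\'olya theorem there is a doubly stochastic matrix $D$ with $\phi^\Delta=D\,\psi^\Delta$, and by Birkhoff's theorem $D=\sum_k\lambda_k P_{\pi_k}$ is a convex combination of permutation matrices. Put
\begin{equation}
K_k=\sqrt{\lambda_k}\sum_i\sqrt{\frac{\psi_{\pi_k^{-1}(i)}}{\phi_i}}\;\ket{\pi_k^{-1}(i)}\bra i .
\end{equation}
Then a one-line check gives $K_k\ket\phi=\sqrt{\lambda_k}\ket\psi$, and, using $\phi_i=(D\psi^\Delta)_i=\sum_k\lambda_k\psi_{\pi_k^{-1}(i)}$,
\begin{equation}
\sum_k K_k^\dagger K_k=\sum_i\frac{1}{\phi_i}\Big(\sum_k\lambda_k\psi_{\pi_k^{-1}(i)}\Big)\ket i\bra i=\mathbb I .
\end{equation}
Each $K_k$ is a generalised-permutation (monomial) operator, so both $\{K_k\}$ and $\{K_k^\dagger\}$ are incoherent-preserving; hence $\Lambda(\cdot)=\sum_k K_k(\cdot)K_k^\dagger$ is an SIO (a fortiori an IO) and $\Lambda(\ket\phi\bra\phi)=\big(\sum_k\lambda_k\big)\ket\psi\bra\psi=\ket\psi\bra\psi$.

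\emph{Necessity.} Conversely, suppose an IO $\Lambda(\cdot)=\sum_k K_k(\cdot)K_k^\dagger$ maps $\ket\phi\bra\phi$ to $\ket\psi\bra\psi$. Purity of the output forces every nonzero $K_k\ket\phi$ to be proportional to $\ket\psi$; absorbing phases, $K_k\ket\phi=\sqrt{p_k}\ket\psi$ with $p_k\ge0$, $\sum_k p_k=1$, and each incoherent Kraus operator has at most one nonzero entry per column. The cleanest way to conclude is through coherence monotones: for every concave $g$ with $g(0)=0$, the functional $\ket\chi\mapsto\sum_i g(\chi_i)$ on pure states is non-increasing under IO (the pure-state specialisation of the standard monotonicity argument for incoherent operations, cf.\ \cite{winter16}), so $\sum_i g(\phi_i)\ge\sum_i g(\psi_i)$ for all such $g$; by Hardy--Littlewood--P\'olya this is exactly $\phi^\Delta\prec\psi^\Delta$. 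Since the necessity bound applies already to IO while the sufficiency construction lives inside SIO, IO and SIO realise the same set of pure-state conversions — the fact invoked in Thm.~\ref{th:perfect}.

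\emph{Main obstacle.} The sufficiency direction and the normalisations are bookkeeping; the substantive point is the necessity direction, and specifically that \emph{genuine} IO (with non-injective Kraus operators, where several coordinates of $\phi^\Delta$ interfere into one coordinate of $\psi^\Delta$) buys nothing over SIO for pure-to-pure conversion. A naive moduli-squared accounting of $K_k\ket\phi=\sqrt{p_k}\ket\psi$ exhibits only a \emph{column}-stochastic relation between $\phi^\Delta$ and $\psi^\Delta$, which is not equivalent to majorization; promoting this to the Schur-concave monotonicity above (equivalently, to a doubly-stochastic relation, equivalently to the LOCC picture via the maximally correlated states $\ket{\Phi_\chi}=\sum_i\sqrt{\chi_i}\ket{ii}$ together with Nielsen's theorem) is where the real argument lies, and this is the step I would reproduce most carefully from \cite{winter16}.
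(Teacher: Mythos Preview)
The paper does not prove this lemma at all: it is quoted verbatim as a known result from \cite{winter16} and used as a black box in the proof of Thm.~\ref{th:perfect}. There is therefore no ``paper's own proof'' to compare your proposal against.

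That said, your sketch is essentially the standard argument from \cite{winter16} and is correct in outline. The sufficiency construction via $\phi^\Delta=D\,\psi^\Delta$, Birkhoff, and monomial Kraus operators is exactly the usual one (and your completeness check is right, modulo the caveat you already handled about restricting to $\mathrm{supp}(\phi^\Delta)$ so that no division by zero occurs). For necessity you correctly identify the real content: a column-stochastic relation from the incoherent Kraus structure is not enough, and one must invoke either the Schur-concave monotones of \cite{winter16} or the maximally-correlated-state embedding plus Nielsen's theorem. Your proposal defers the details of that step to the reference, which is fair, but note that this is precisely the part that distinguishes IO from a generic incoherence-preserving map and hence the only place a subtle error could hide; if you were to write it out fully, the cleanest route is the one you mention last --- associate $\ket\chi\mapsto\sum_i\sqrt{\chi_i}\ket{ii}$, observe that an IO on one subsystem is LOCC, and apply Nielsen.
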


Equipped with the above two lemmas, now we can prove our main result Thm.~\ref{th:perfect} in this section.

\emph{Proof of Theorem 3}.
Under the pure state smoothing restriction, all possible transformations are between pure states (with or without catalysts). Therefore, as a result of Lemma \ref{le:2}, we must have
\begin{align}
    \tilde{\mathrm{C}}^\varepsilon_{IO,d}(\phi)&=
    \tilde{\mathrm{C}}^\varepsilon_{SIO,d}(\phi),\\
    
    \tilde{\mathrm{C}}^\varepsilon_{IO,pc}(\phi)&=
    \tilde{\mathrm{C}}^\varepsilon_{SIO,pc}(\phi).
\end{align}
Hence in the following we only consider IO for simplicity of notation.

First we consider an extreme case---exact distillation ($\varepsilon = 0$). The maximally coherent state $\ket{\Psi_N}$ with dimension $N \le r$ has the following coefficient vector with r elements
\begin{equation}
    \Psi_N^{\Delta}=[\frac1N,...,\frac1N,0,...,0].
\end{equation}
It's easy to see that $\phi^{\Delta} \prec \Psi_N^{\Delta}$ if and only if 
\begin{equation}
    \phi^{\Delta}_{\max}\equiv \max_i\{\phi^\Delta_i\} \le \frac1N.
\end{equation}
So, by Lemma~\ref{le:2}, the exact distillable coherence of $\ket\phi$ is 
\begin{equation}
    C_{IO,d}\left(\phi\right)=\log\floor{1/\phi^{\Delta}_{\max}}.
\end{equation}

Next consider catalytic distillation. By Lemma~\ref{le:2}, For $\ket\phi \xrightarrow{IO,pc} \ket{\Psi_N}$ to be possible, there must exist some $\ket\omega$ such that $\phi^{\Delta}\tens\omega^{\Delta} \prec \Psi_N^{\Delta}\tens\omega^{\Delta}$, which is just $\phi^{\Delta} \prec_T \Psi_N^{\Delta}$. By Lemma~\ref{le:1}, this is equivalent to
\begin{align}
    &S_\alpha(\phi^{\Delta}) > S_\alpha(\Psi_N^{\Delta}),\quad \forall \alpha>0\\
    \Leftrightarrow\quad &S_\alpha(\phi^{\Delta}) > \log N,\quad \forall \alpha>0\\
    \Leftrightarrow\quad &S_\infty(\phi^{\Delta}) \ge \log N\\
    \Leftrightarrow\quad &-\log \phi^{\Delta}_{\max} \ge \log N,
\end{align}
where the third line use the fact that R\'enyi entropy is continuous and non-increasing for $\alpha>0$. In the above derivation, we have assumed (1) $\phi^{\Delta}$ has only non-zero elements, for we can always truncate the Hilbert space to $\ket\phi$'s support without loss of generality, and (2) $\Psi_N^{\Delta}$ has zero elements, which is because we cannot catalytically distill a maximally coherent state from other state of the same dimension $r$, as is clear from the fact that $S_{\infty}(\phi^{\Delta})\le \log r$ which takes equality iff $\ket\phi$ is a $r$-dimensional maximally coherent state. Therefore, we can use Lemma~\ref{le:1} and only verify the case $\alpha>0$. 

As a result, we have
\begin{equation}\label{eq:56}
    C_{IO,pc}(\phi)=C_{IO,d}\left(\phi\right)=\log\floor{1/\phi^{\Delta}_{\max}}.
\end{equation}

\begin{remark}
Eq.~\eqref{eq:56} establishes that, although coherence catalysts are helpful for general pure state transformation, it is not in the task of coherence distillation (of pure state). Besides, the second equation of Eq.~\eqref{eq:56} is the same as the result in \cite{Regula-one-shot}, yet the proof is different.

\end{remark}


Now we deal with the case that allows a smoothing parameter $\varepsilon$ under the pure state smoothing restriction. We use the following well-known inequality between trace distance and fidelity \cite{fuchs99},
\begin{equation}\label{eq:fuchs}
    1-\mathrm{F}(\rho,\sigma)\le \mathrm{T}(\rho,\sigma) \le \mathrm{P}(\rho,\sigma),
\end{equation}
where $\mathrm{T}(\rho,\sigma) = \frac12 \|\rho-\sigma\|_1$ is the trace distance, and recall $\mathrm{P}(\rho,\sigma) = \sqrt{1-\mathrm{F}^2(\rho,\sigma)}$.

For pure states $\ket\phi$, $\ket\psi$, we have
\begin{equation}
    \mathrm{T}\left(\Delta\left(\ket\phi\bra\phi\right),\Delta\left(\ket\psi\bra\psi\right)\right) = \frac12\sum_{i=1}^n\left|\phi^{\Delta}_i-\psi^{\Delta}_i\right|,
\end{equation}which equals to the classical trace distance between vectors $\phi^\Delta$ and $\psi^\Delta$. We denote this as $\mathrm{T}\left(\phi^\Delta,\psi^\Delta\right)$ for simplicity. \\

Let the distillable coherence with perfect catalyst $\tilde{C}_{IO,pc}^\varepsilon(\phi)=N\le r$. Then, there exists a pure state $\tilde{\ket{{\Psi}}}$ such that $\phi^{\Delta} \prec_T \tilde{\Psi}^{\Delta}$ and $\mathrm{P}(\Psi_N,\tilde{{\Psi}})\le\varepsilon$. By Lemma~\ref{le:1}, this relation leads to
\begin{align}
    S_\infty(\phi^{\Delta})&\ge S_\infty(\tilde{\Psi}^{\Delta})\\
    \Leftrightarrow\quad  \phi^{\Delta}_{\max} &\le \tilde{\Psi}^{\Delta}_{\max}.\label{eq:phibound1}
\end{align}

Next, we bound the maximum element of vector $\tilde{\Psi}^\Delta$ and notice that
\begin{align}
    \mathrm{T}(\Psi_N^\Delta,\tilde{\Psi}^\Delta)&\le
    \mathrm{T}(\Psi_N,\tilde{\Psi})\\&\le
    \mathrm{P}(\Psi_N,\tilde{\Psi})\\&\le
    \varepsilon,
\end{align}
where the first inequality is due to the fact that trace distance is non-increasing under any quantum channel (here the dephasing channel $\Delta$), and the second one is due to Eq.~\eqref{eq:fuchs}. 

Based on the fact that the difference between the maximum elements of two normalized vectors is upper bounded by their trace distance, one gets
\begin{equation}
    \tilde{\Psi}^{\Delta}_{\max}\le\frac1N + \varepsilon,
\end{equation}
and as a consequence of  Eq.~\eqref{eq:phibound1}, we have
\begin{equation}
    \phi^{\Delta}_{\max} \le \frac1N + \varepsilon\label{eq:phibound}.
\end{equation}

Now we construct a pure state $\ket{\Psi^{*}}=\sum_{i=1}^{r}\sqrt{\Psi^{*\Delta}_i}\ket{i}$ with coefficient vector
\begin{equation}
    \Psi^{*\Delta}_i = 
    \begin{cases}
    \cfrac1N+\varepsilon ,& 1\le i \le N-1,\\
    \cfrac1N-\varepsilon(N-1) ,& i = N,\\
    0 ,& N+1\le i\le r.
    \end{cases}
\end{equation}
By our assumption $ r(r-1)\varepsilon \le 1$, hence $N(N-1)\varepsilon \le 1$, we see $\Psi^{*\Delta}$ is a non-negative and normalized vector. Together with Eq.~\eqref{eq:phibound}, we obtain
\begin{equation}
    \phi^{\Delta} \prec \Psi^{{*\Delta}},
\end{equation}
which makes the transformation $\ket\phi\xrightarrow{\text{IO,d}}\ket{\Psi^*}$ possible (without catalysts).

On the other hand, the fidelity between $\Psi_N$ and $\Psi^*$ can be bounded, 
 \begin{equation}
\begin{aligned}
    &\mathrm{F}\left(\Psi_N,\Psi^*\right)\\
    &= (N-1)\sqrt{\cfrac1N\left(\cfrac1N+\varepsilon\right)}+\sqrt{\cfrac1N\left(\cfrac1N-(N-1)\varepsilon\right)}\\
    &\ge (N-1)\sqrt{\cfrac1N\left(\cfrac1N-(N-1)\varepsilon\right)}+\sqrt{\cfrac1N\left(\cfrac1N-(N-1)\varepsilon\right)}\\
    &=\sqrt{1-N(N-1)\varepsilon}.
\end{aligned}
\end{equation}
This leads to the following bound of purified distance:
\begin{align}
    \mathrm{P}\left(\Psi_N,\Psi^*\right)\le \sqrt{N(N-1)\varepsilon} \le \sqrt{r(r-1)\varepsilon},
\end{align}
which establishes
\begin{equation}
    \tilde{C}_{IO,d}^{\sqrt{r(r-1)\varepsilon}}(\phi) \ge  \tilde{\mathrm{C}}^\varepsilon_{IO,pc}(\phi). 
\end{equation}
Together with the trivial inequality
\begin{equation}
    \tilde{C}_{IO,d}^{{\varepsilon}}(\phi) \le \tilde{\mathrm{C}}^\varepsilon_{IO,pc}(\phi),
\end{equation}
Thm.~\ref{th:perfect} is proved.




\section{Conclusion and Discussion}\label{sec:level4}
In this work, we introduce and systemically study the catalytic coherence distillation tasks. 
We show that, with unlimited resource as catalysts and global smoothing error, one can “distill” infinite amount of coherence with vanishingly small error from any initial state. This embezzling phenomenon is illustrated with two different protocols. We further define the distillable coherence with catalysts of restricted dimension, and derive a lower bound of this quantity. Moreover, we define and investigate the distillable coherence with perfect catalysts, by requiring the catalysts to be pure and precisely unchanged after the incoherent operations. In this setting, we find that the power of catalysts declines drastically. In particular,
for pure states and IO/SIO under a pure state smoothing restriction, catalyst provides almost no advantages for distillation. In general, our results enhance the understanding of catalytic effect in quantum resource theory.

Our work for the first time elucidates the existence of embezzling phenomenon in the resource theory of coherence. 
This phenomenon has been suggested in such as \cite{Streltsov-review, Eric18review}, but, to the best of our knowledge, we are the first to give explicit protocols to illustrate it in coherence resource theory. Also, we apply the recently proposed convex-split lemma to achieve one of our embezzling protocol, which is completely different from the previous embezzling state method in entanglement theory \cite{Embez}. 
Thus, our work also finds a new application of the convex-split lemma, which has been shown useful in other tasks such as in \cite{CataDec, CataRes}. Note that, \cite{EmbezThermo} discusses similar embezzling phenomenon in thermodynamics which is sometimes called ``unspeakable coherence'', but is quite different from the resource framework discussed in this paper.

What's more, our perfect catalysts scenario serves as a physically relevant definition of catalytic coherence distillation problem, settling the embezzling phenomenon and guaranteeing catalysts to be effective even after many rounds of distillation. However, in this scenario the operational power seems to decline. We hope that our different definitions can be inspiring for future studies on catalytic effect in quantum resource theory. 

Several questions are left open in our work. In the restricted catalysts scenario, it is practically interesting to explore a tight bound of distillable coherence and check whether there exists an embezzling protocol reaching the optimal rate. Similar questions about the embezzling efficiency is discussed in \cite{Embez}. In the perfect catalysts scenario, it is worth  exploring the catalytic advantage without the pure state smoothing restriction in Thm.~\ref{th:perfect}. Moreover,  similar studies can also be extended to mixed states and other free operation sets.\


\section*{ACKNOWLEDGEMENT}
This work is supported by the National Natural Science Foundation of China Grants No.~11674193 and No.~11875173, and the National Key R\&D Program of China Grants No.~2017YFA0303900 and No.~2017YFA0304004.
We thank Xiongfeng Ma, Yunchao Liu and Pei Zeng for the insightful discussions. 
\bibliographystyle{apsrev4-1}

\bibliography{CatalyticCoherenceDistillation}
\end{document}